\title{If You Must Choose Among Your Children, Pick the Right One}
\author{
    Brittany Terese Fasy\footnotemark[2]~\thanks{Department of Mathematical
    Sciences, Montana State U.}
        \and
        Benjamin Holmgren\thanks{School of Computing,
            Montana State
            U. \newline
   {\scriptsize {\tt \{brittany.fasy, bradleymccoy, david.millman\}@montana.edu}
    {\tt benjamin.holmgren@student.montana.edu}}}
        \and
        Bradley McCoy\footnotemark[2]
        \and
        David L. Millman\footnotemark[2]
    }
\begin{document}
\thispagestyle{empty}
\maketitle

\begin{abstract}
    Given a simplicial complex $K$ and an injective function~$f$ from the vertices
of~$K$ to $\mathbb{R}$,
 we consider algorithms that extend $f$ to a discrete Morse function on~$K$. 
 We show that an algorithm of King, Knudson and Mramor can be described 
 on the directed Hasse diagram of~$K$.
 Our description has a faster runtime for high dimensional data with no increase in space.

\end{abstract}

\section{Introduction}
\label{sec:intro}

Milnor's classical Morse theory provides tools for investigating the
topology of smooth manifolds \cite{milnor63}.  In \cite{forman98}, Forman showed
that many of the tools for continuous functions can be applied in the discrete
setting.  Inferences about the topology of a CW complex can be made from the number
of critical cells in a Morse function on the complex.

Given a Morse function one can interpret the function in many ways. Switching interpretations is often revealing.
In this paper, we think of a discrete Morse function in three different ways.
Algebraically, a Morse function is a function from the faces of a complex to the real numbers,
subject to certain inequalities. Topologically,
a Morse function is a pairing of the faces such that the removal of any pair
does not change the topology of the complex. Combinatorially,
a Morse function is an acyclic matching in the Hasse diagram of the complex, where unmatched
faces correspond to critical cells.

Discrete Morse theory can be combined with persistent homology to
analyze data, see
\cite{king,uli-phd,bauer2012optimal,edelsbrunner02,wang18,vcomic2011dimension,edelsbrunner2003hierarchical}.
When dealing with data, we have the additional constraint
that vertices have function values assigned.
For complexes without any preassigned function values,
Joswig and Pfetsch showed
that finding a Morse function with a minimum number of critical cells is
NP-Hard \cite{joswig04}.  Algorithms that find Morse functions with relatively
few critical cells have been explored in
\cite{lewiner03,nanda2013morse,hersh05}.

In this work, we consider the algorithm \extract, \algref{extract},
given in \cite{king}.  \extract \ takes as input a simplicial complex and an
injective function from the vertices to the reals, and returns a discrete Morse
function, giving topological information about the complex. We show that a
subalgorithm of \extract, \eraw\ can be simplified by considering the directed Hasse
diagram. This simplification leads to an improved runtime and no change in
space.  The paper is organized as follows, in \secref{background}, we provide
the definitions that will be used in the paper. In
\secref{king}, we describe \extract\ and analyze the runtime, then, in
\secref{ours}, we give our reformulation and show that the runtime is improved
from \erawrun{} to \ercrun{} where $n$ is the number of
cells and $d$ is the dimension of $K$.

\section{Background}
\label{sec:background}

In this section, we provide definitions, notation, and primitive operations
used throughout the paper.
For a general overview of discrete Morse theory see~\cite{nic,knudson2015morse},
note that both texts provide a description of
\extract\ originally given in~\cite{king}. \extract\ is the starting point for this work.

In what follows, we adapt the notation of Edelsbrunner and
Harer~\cite{edelsbrunner2010computational} to the
definitions of Forman~\cite{forman02}.
Here, we work with simplicial complexes, but the results hold for CW complexes.
Let $K$ be a simplicial complex with $n$ simplices.
For $i \in \N$, denote the $i$-simplices of $K$ as $K_i$,
the number of simplices in $K_i$ as $n_i$,
and  the dimension of the highest dimensional simplex of $K$ as $\dim{K}$.

Let $\sigma \in K,$ denote the dimension of $\sigma$ as $\dim{\sigma}$.
Let $p = \dim{\sigma}$
and $\{v_0, v_1,\cdots,v_p \} \subseteq K_0$ be the zero-simplices of $\sigma$,
then we say $\sigma=[v_0,v_1,\cdots,v_p]$.
If $\tau \in K$ is disjoint from $\sigma$,
then we can define the \emph{join} of $\sigma$ and $\tau$ to be the
$(\dim{\sigma}+\dim{\tau}+1)$-simplex that
consists of the union of the vertices in $\sigma$ and $\tau$,
denoted $\sigma * \tau$.
We write $\sigma \prec \tau$ if $\sigma$ is a proper face of $\tau$.

Let $p \in \N$ and consider simplices $\sigma_u, \sigma_v \in K_p$, with
$\sigma_u = [u_0, u_1, \ldots, u_p]$ and
$\sigma_v = [v_0, v_1, \ldots, v_p]$.
Let $f_0:K_0 \rightarrow \R$ be an injective function.
Without loss of generality, assume that the zero-simplices of $\sigma_u$ and
$\sigma_v$ are sorted by function value, that is,
we have $f_0(v_{i})<f_0(v_{j})$ when $0 \leq i < j \leq p$,
similarly for $\sigma_u$.
We say that
$\sigma_u$ is lexicographically smaller than $\sigma_v$,
denoted $\sigma_u \lexlt \sigma_v$,
if the vector
$\langle f_0(u_p), f_0(u_{p-1}), \ldots, f_0(u_0) \rangle$
is lexicographically smaller than
$\langle f_0(v_p), f_0(v_{p-1}), \ldots, f_0(v_0) \rangle$.

The \emph{star} of $v$ in $K$, denoted $\starK{v}$,
is the set of all simplices of $K$ containing $v$.
The \emph{closed star} of $v$ in~$K$, denoted $\closedStarK{v}$,
is the closure of $\starK{v}$.
The \emph{link} of $v$ in $K$, is denoted as
$\linkK{v} := \closedStarK{v} \setminus \starK{v}$.
We define the \emph{lower link} of $v$, denoted $\llinkK{v}$, to be the maximal subcomplex
of $\linkK{v}$  whose zero-simplices have function value less than $f_0(v)$; the
lower link can be computed in~$O(n)$~time.
\camera{See \algref{link} for details on computing the lower link.}

We provide the definition of a Morse function, modified from Forman~\cite{forman02}.
\camera{See
\appendref{defs} for the equivalence of the definitions.}

\begin{restatable}[Morse Function]{definition}{morsedefn}\label{def:morsedefn}
A function $f : K\rightarrow \R$ is a \emph{discrete Morse function},
    if for every $\sigma \in K$,
    the following two conditions hold:
\begin{enumerate}
    \item $|\{ \beta \succ \sigma | f(\beta)\leq f(\sigma)\}|\leq 1,$
    \item $|\{ \gamma \prec \sigma | f(\gamma)\geq f(\sigma)\}|\leq 1.$
\end{enumerate}    \end{restatable}

An intuitive definition is given in \cite{nic},
``the function generally increases as you increase the dimension
of the simplices. But we allow at most one exception per simplex."
Let $f : K\rightarrow \R$ be a discrete Morse function.
A simplex $\sigma \in K$ is \emph{critical}
if the following two conditions hold:
\begin{enumerate}
    \item $|\{ \beta \succ \sigma | f(\beta)\leq f(\sigma)\}|=0,$
    \item $|\{\gamma \prec \sigma | f(\gamma)\geq f(\sigma)\}|=0.$
\end{enumerate}
Simplices that are not critical are called \emph{regular}.

Given a discrete Morse function $f$ on a simplicial complex $K$, we define
the induced \emph{gradient vector field}, or GVF, for short, as
$\{(\sigma,\tau) : \sigma \prec \tau, f(\sigma)\geq f(\tau)\}$.
Note that
$\sigma$ is a codimension one face of $\tau$.
\camera{See \lemref{impossible}.}
We can gain some intuition for this definition by drawing
arrows on the simplicial complex as follows.
If $\sigma$ is regular, a codimension one face of $\tau$, and
$f(\tau) \leq f(\sigma)$,
then we draw
an arrow from $\sigma$ to $\tau$.
Constructing a GVF for a \simp\
is as powerful as having a discrete Morse function, and is the goal of both
\extract\ and our proposed \algref{erchild}.

Next, we define two functions that are helpful when constructing a GVF.
The \emph{rightmost face} of $\sigma$, denoted~$\rchild{\sigma}$,
is the face of $\sigma$ with maximum lexicographic value.
The \emph{leftmost coface} of $\sigma$, denoted $\lparent{\sigma}$,
is the dimension one coface of $\sigma$ with minimum lexicographic value.
We say $\sigma$ is a \emph{left-right parent} and we call $\rchild{\sigma}$ a
\emph{left-right child} if
$\lparentName \circ \rchild{\sigma} = \sigma$.

In \cite{forman02}, Forman showed that each simplex in $K$ is exclusively a tail, head,
or unmatched.  Moreover, the unmatched simplices are critical.
Thus, we can partition the simplices of $K$ into
\emph{heads} $\heads$,
\emph{tails}~$\tails$, and
\emph{critical simplices} $\criticals$,
and encode the GVF as a bijection
$\matches: \tails \rightarrow \heads$.
That is, we can represent the GVF for $f$ as the unique tuple $\gvf$.
\camera{the uniquenes of the GVF doesn't seem to have a lemma that we can point
    to. Add something to appendix to show GVF is well-defined.}
We will use this representation throughout our algorithms.

Note that a GVF is a particularly useful construction.
It provides a way to reduce the size of a simplicial complex without changing
the topology (by cancelling matched pairs), which is constructive for preprocessing large 
simplicial complexes.
See \cite{nanda2013morse,vcomic2011dimension} for examples.

We define a consistent GVF as follows:
\begin{definition}[Consistent GVF]\label{def:consistent}
    Let $K$ be a simplicial complex, and let~$f_0 \colon K_0 \to \R$ be injective.
    Then, we say that a gradient vector field $\gvf$ is \emph{consistent} with $f_0$ if, for
    all~$\eps >0$, there exists a discrete
    Morse function~$f \colon K \to \R$ such~that
    \begin{enumerate}[(a)]
        \item $\gvf$ is the GVF corresponding to $f$.\label{def:consistent-gvf}
        \item $f|_{K_0}=f_0$.\label{def:consistent-restr}
        \item $|f(\sigma) - \max_{v \in \sigma}f_0(v)| \leq
            \eps$.\label{def:consistent-eps}
    \end{enumerate}
\end{definition}

Let $\gvf$ be a GVF.
Then, for $r,p \in \N$,
a \emph{gradient path}\footnote{
    There is a slight discrepancy between the definition of
    Forman~\cite{forman02} and KKM~\cite{king}.  In particular,
    Forman's definition states the head and the tail of the path are
    simplices of the same dimension.
    On the other hand, KKM's usage in the $\ecancel$ algorithm
    expects that the head and tail are different dimensions.
    Here, we state the definition implied by the usage in KKM.
}
is a sequence of simplices in $K$:
$$
    \Gamma = \{\sigma_{-1},
        \tau_0, \sigma_0,
        \tau_1, \sigma_1,
        \ldots,
    \tau_{r},\sigma_{r},
    \tau_{r+1} \}
$$
beginning and ending with
critical simplices $\sigma_{-1} \in K_{p+1}$ and $\tau_{r+1} \in K_{p}$
such that for $0 \leq i < r$,
$\tau_i \in K_p$,
$\sigma_i\in K_{p+1}$,
$\matches(\tau_i) = \sigma_i$,
and $\tau_i \succ \sigma_{i+1} \neq \sigma_i$.
We call a path \emph{nontrivial} if $r > 0$.

\section{A Discrete Morse Extension of $f_0 \colon K_0 \to \R$}
\label{sec:king}

In this section, we give a description of
\algref{extract}~(\extract), originally from \cite{king}.
This algorithm takes a simplicial complex $K$,
an injective function $f_0 \colon K_0 \to \R$,
and a threshold that ignores pairings with small persistence $p \geq 0$;
and returns a GVF on $K$ that is consistent with $f_0$.
\begin{algorithm}
    \caption{\cite{king} \extract}\label{alg:extract}
    \begin{algorithmic}[1]
        \Require A \simp\ $K$, injective function
        $f_0\colon K_0 \to \R$, and $p\geq 0$
        \Ensure a GVF consistent with $f_0$

        \State $\gamma \gets \eraw(K,f_0)$
        \algorithmiccomment{\algref{eraw}}\label{line:extract-calleraw}

        \For{$j=1,2, \ldots,\dim{K}$}
        \State $\gamma \gets \ecancel(K,h,p,j,\gamma)$
        \algorithmiccomment{\shortalgref{ecancel}}

        \EndFor\\
        \Return $\gamma$ 
    \end{algorithmic}
\end{algorithm}

\extract\ uses two subroutines: First, in \lineref{calleraw}{extract} \eraw\ (given in
\algref{eraw}) is used to generate an initial GVF on $K$ consistent with
$f_0$.  Let $(H_0,T_0,C_0,r_0)$ be this initial GVF.
Then, for each dimension ($j = 1$ through $\dim{K}$), the algorithm makes a call to \ecancel\
(given in \algref{ecancel}) that augments an existing gradient path
 to remove simplices from $C_0$ in pairs. For more details, see
 \appendref{kingsubroutines}.

In the next section, we provide a simpler and faster algorithm to replace \eraw,
which dominates the runtime of $\extract$ when $p=0$
(and in practice, when $p$ is very small).  We
conclude this section with properties of the output from \eraw:

\begin{restatable}[Properies of \eraw]{theorem}{kingthm}\label{thm:eraw}\label{thm:king}
    Let $K$ be a simplicial complex, let $f_0 \colon K_0 \to \R$ be an injective
    function, and suppose~$\gvf$ is the output of $\eraw(K,f_0)$.
    Let~$\eps > 0$.
    Then, there exists a discrete Morse function~$f \colon K \to \R$ such that
    the following~hold:
    \begin{enumerate}[(i)]
        \item $\gvf$ is a GVF consistent with~$f_0$.\label{part:eraw-gvf}
        \item Let $\sigma \in K$.  Then, $\sigma \in \heads$ if and only if
            $\sigma$ is a left-right parent.\label{part:eraw-ifftails}
        \item For all $\sigma \in \heads$,
            $\matches(\rchild{\sigma})=\sigma$.\label{part:eraw-composition}
        \item The runtime of \eraw\ is \erawrun.\label{part:eraw-timebound}
    \end{enumerate}
\end{restatable}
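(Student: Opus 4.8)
The plan is to reduce all of (i)--(iii) to the lower-star decomposition of $K$ and then to read the runtime (iv) off the loop structure of \eraw. Throughout I write $v_0, \dots, v_p$ for the vertices of a simplex $\sigma$ sorted by increasing $f_0$-value, so that $v_p$ is the maximal vertex of $\sigma$. First I would record two consequences of the definition of $\lexlt$. First, $\rchild{\sigma} = \sigma \setminus \{v_0\}$: deleting any vertex other than the minimal one discards a larger function value and therefore produces a lexicographically smaller face, so the rightmost face is obtained by dropping $v_0$. Second, among the dimension-one cofaces of a simplex $\tau$ that keep the maximal vertex fixed, namely those $\tau \cup \{u\}$ with $f_0(u) < \max_{w \in \tau} f_0(w)$, the leftmost is the one adjoining the smallest admissible $u$, while every coface that raises the maximal vertex is lexicographically larger. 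Both operations therefore stay inside the lower star of the maximal vertex, which is exactly the locality that lets \eraw decide each pairing while processing a single vertex.

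Next I would prove (ii) and (iii) together by tracing \eraw one lower star at a time, equivalently by inducting over $\llinkK{v}$, over which the lower star of $v$ is a cone. Using the first observation, each tail $\tau$ produced by the algorithm is matched with $\lparent{\tau}$, and using the characterization implicit in the second observation, the equality $\lparent{\rchild{\sigma}} = \sigma$ fails precisely when some vertex $u$ with $f_0(u) < f_0(v_0)$ cones onto $\rchild{\sigma}$; in that case $\rchild{\sigma}$ is instead matched upward to a strictly lexicographically smaller coface. Identifying the pairs generated by \eraw with $\{(\rchild{\sigma}, \sigma) \mid \sigma \text{ a left-right parent}\}$ then yields simultaneously the biconditional of (ii) and the identity $\matches(\rchild{\sigma}) = \sigma$ of (iii). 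I expect this identification to be the main obstacle: it requires matching the operational choice of pairing vertex made by \eraw against the global lexicographic definitions of $\rchild{\cdot}$ and $\lparent{\cdot}$, and one must separately verify that no simplex is at once a head and a tail, so that $\matches$ is a well-defined bijection $\tails \to \heads$.

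For (i) I would produce, for each $\eps > 0$, a discrete Morse function $f$ realizing $\gvf$. By the first observation every matched pair $(\rchild{\sigma}, \sigma)$ lies in a single lower star, and the lower stars are linearly ordered by $f_0(v_p)$; ordering the pairs inside each star by the lexicographic value of the head produces a linear extension compatible with both the face relation and the matching, which certifies that the matching is acyclic and hence is the GVF of some Morse function. I would then set $f(\sigma) = \max_{v \in \sigma} f_0(v) + \eta\,\delta(\sigma)$, where $\delta(\sigma) \in (-1,0]$ is a tie-break chosen so that each head receives a value just below that of its matched face, and $\eta$ is taken smaller than both $\eps$ and the least gap between two $f_0$-values. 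Condition~(b) and condition~(c) of the consistent-GVF definition hold by construction, and condition~(a) follows because $\eta$ is too small to alter any comparison between distinct lower stars while exactly reproducing the intra-star pairing.

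Finally, for (iv) I would tally the cost of \eraw from the pseudocode in \algref{eraw}: the algorithm ranges over the $n$ cells, and per vertex it builds a lower link in $O(n)$ time and resolves the local pairing via the two observations above. Summing these contributions gives the stated bound \erawrun. This last step is routine; the only care needed is to charge the lower-link constructions correctly across the vertices, so that the dependence on $n$ and $\dim{K}$ comes out as claimed.
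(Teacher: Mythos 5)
Your handling of parts \partinref{eraw}{ifftails} and \partinref{eraw}{composition} is essentially the paper's own route: the paper proves these as \lemref{kingfindmatch} and \lemref{kingfindtails}, by induction on $\dim{K}$, tracing how \eraw\ cones the GVF of $\llinkK{v}$ with $v$; your two lexicographic observations (that $\rchild{\sigma}$ drops the minimal vertex and that $\lparent{\tau}$ adjoins the smallest admissible vertex) are exactly the computations the paper carries out with explicit vertex lists. For \partinref{eraw}{gvf} you build the Morse function $f$ from scratch, whereas the paper simply cites Theorem 3.1 of KKM~\cite{king}; your construction parallels the paper's consistency argument for \erchild\ (\lemref{assignment}), and although the tie-breaking bookkeeping you defer is nontrivial, it is a workable alternative to the citation.

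The genuine gap is in \partinref{eraw}{timebound}. You describe \eraw\ as a single pass that, per vertex, ``builds a lower link in $O(n)$ time and resolves the local pairing via the two observations above'' --- but that is \erchild, the paper's new algorithm, not \eraw. In \algref{eraw}, every vertex whose lower link is nonempty triggers, at \lineref{callexact}{eraw}, a recursive call $\extract(K',f_0,\infty)$, which in turn calls \eraw\ on $K'$ and then \ecancel\ in every dimension; \eraw's pairing decisions come out of this recursion, not out of any local lexicographic test. (That the recursive output coincides with the left-right pairing is precisely the theorem being proven; it cannot be used to cost the algorithm as it is actually executed.) Your per-vertex tally therefore misses the recursive blow-up --- the paper's appendix shows the recursion alone performs $\Omega(2^{d+1})$ lower-link computations (\lemref{rllbound}) --- and it also argues in the wrong direction: the paper establishes \erawrun{} as a \emph{lower} bound, by noting that a single execution of \lineref{callexact}{eraw} can face a lower link of size $\Theta(n/2)$ on which the inner \extract\ with $p=\infty$ must produce optimal pairings, a task at least as hard as computing homology and hence subject to the matrix-multiplication lower bound of Raz~\cite{raz2002complexity}. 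An additive summation of per-vertex costs like yours would, at best, upper-bound the simplified algorithm (landing near \ercrun{} or $O(n_0 n)$), and so cannot establish the claimed bound for \eraw.
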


\section{A Faster Algorithm for \eraw}
\label{sec:ours}

The main contribution of this paper is
\erchild, which we show is a simplified version of
\eraw{} that has the same output with an improved runtime.
This section provides a description of the algorithm, and a proof of the
equivalence with \eraw.

\subsection{Hasse Diagram Data Structure}
We assume that KKM~\cite{king} represent $K$ in a
\emph{standard Hasse diagram data structure} $\hasseK$,
which can be encoded as an adjacency list representation for a graph.
Each simplex $\sigma \in K$ is represented by a node in $\hasseK$.
We abuse notation and write $\sigma \in \hasseK$ as the corresponding node.
Two simplices $\sigma,\tau \in \hasseK$ are connected
by an edge from $\sigma$ to $\tau$ if
$\sigma$ is a codimension one face of $\tau$.
For a node $\sigma \in \hasseK$, we partition its edges into two sets,
$\up{\sigma}$ and $\down{\sigma}$ as the edges in which $\sigma$ is a
face or coface, respectively.

For $p \in \N$, we denote the nodes of $\hasseK$ corresponding to the
$p$-simplices of $K$
as $\hasseKI{p}$ and we store each $\hasseKI{p}$ in its own
set that can be accessed in $O(1)$ time.
Note that there is no requirement about the ordering of
the edges  or the nodes in each $\hasseKI{p}$.
See \figref{cat} for an example of the data structure.

\begin{figure}[ht!]
    \begin{subfigure}[b]{0.3\columnwidth}
        \includegraphics[height=.9in]{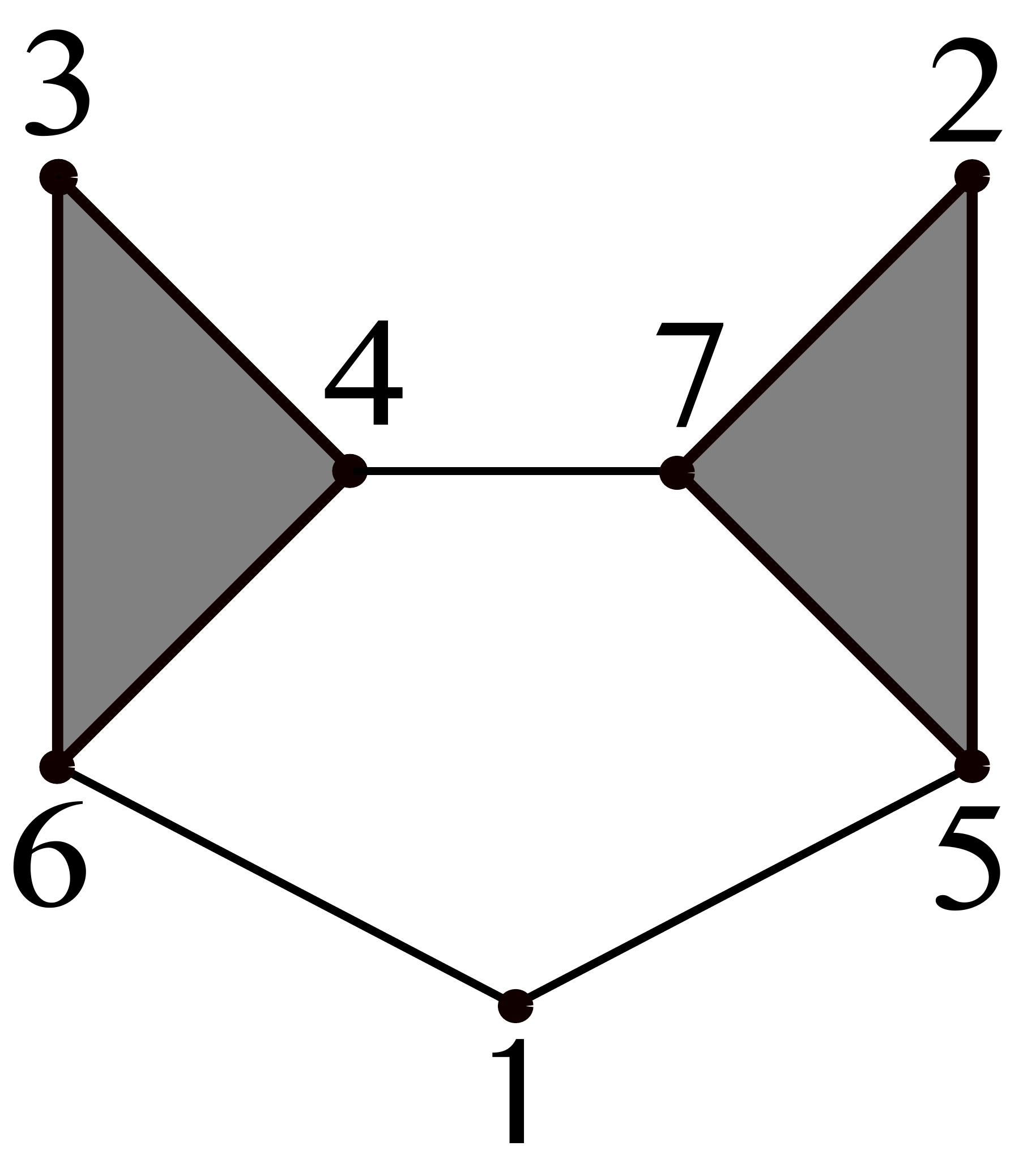}
    \end{subfigure}
    \begin{subfigure}[b]{0.6\columnwidth}
        \includegraphics[width=2.25in]{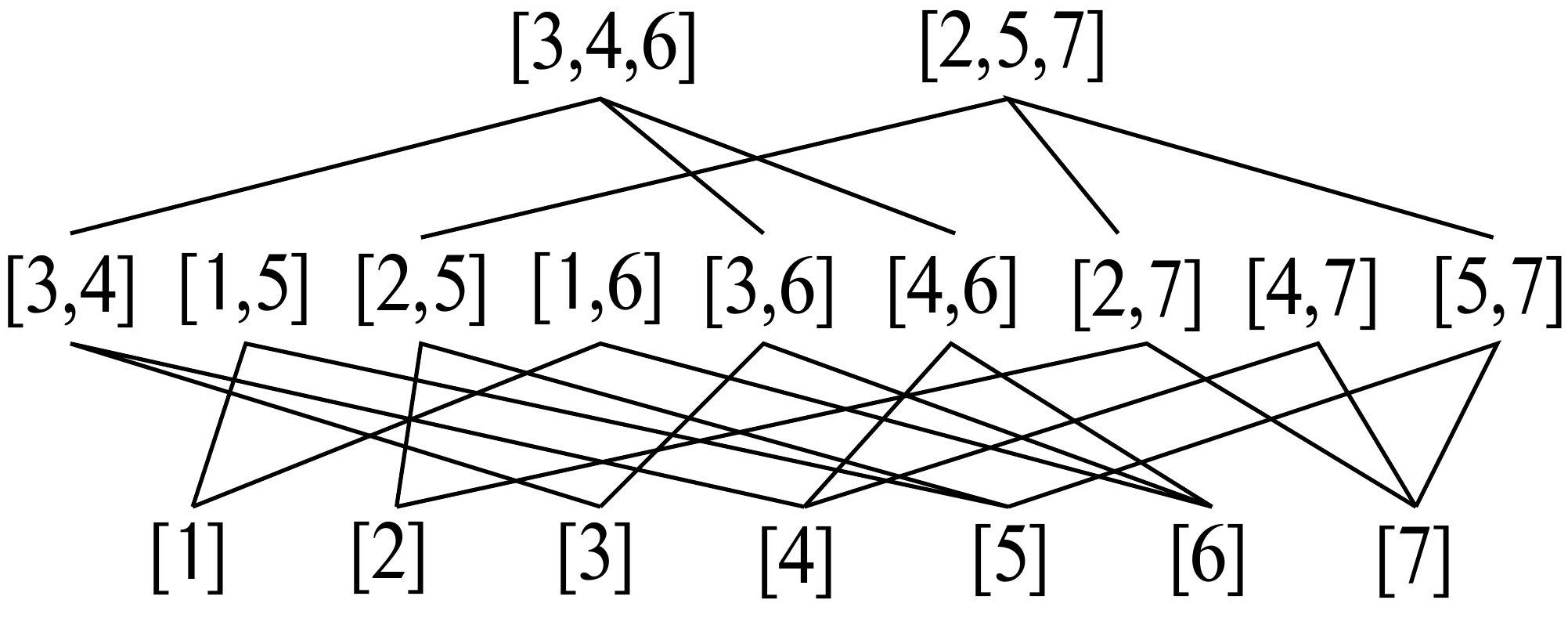}
    \end{subfigure}
    \caption{
        {\it Left}: A simplicial complex with function values assigned to
        the vertices.
        {\it Right}: The Hasse diagram of the simplicial complex.
    }
    \label{fig:cat}\label{fig:hasse}
\end{figure}

For our algorithm, we decorate each node of $\hasseK$ with additional data.
For clarity, we denote the decorated data data structure as $\dhasseK$.
Next, we describe the additional data stored in each node and how to
initialize the data.
Consider $\sigma \in \dhasseKI{p}$ and
define $\maxh(\sigma) := \max_{v \in \sigma} f_0(v)$.
Each node stores $\maxh(\sigma)$,
the rightmost child $\rchild{\sigma}$ and
leftmost parent $\lparent{\sigma}$.

Next, we describe how to initialize the data and summarize with the following
lemma.

\begin{lemma}[Hasse decoration]\label{lem:decoration}
    Given a simplicial complex $K$ with $n$ simplices and $\dim{K} = d$.
    The decorated Hasse diagram uses $O(n)$ additional space.
    We can decorate the Hasse digram $K$ in $O(dn)$ time.
\end{lemma}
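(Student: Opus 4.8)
The plan is to prove the space and time bounds separately, each by walking through the initialization of the three pieces of decorated data stored at every node of $\dhasseK$, namely $\maxh(\sigma)$, $\rchild{\sigma}$, and $\lparent{\sigma}$. For the space bound, the observation is simply that we attach a constant amount of extra information per node: a single real value $\maxh(\sigma)$ and two pointers (to $\rchild{\sigma}$ and to $\lparent{\sigma}$). Since there are $n$ nodes, the total additional space is $O(n)$, independent of the dimension $d$. I would state this first, as it requires no real computation.

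For the time bound, I would initialize the data by a single pass over the nodes in order of increasing dimension, $p = 0, 1, \ldots, d$, so that when processing a $p$-simplex its $(p-1)$-dimensional faces have already been decorated. First I would compute $\maxh(\sigma)$ for every $\sigma$: for a vertex $v$, $\maxh(v) = f_0(v)$ in $O(1)$ time; for a $p$-simplex $\sigma$ with $p \geq 1$, I would take the maximum of $\maxh$ over its $p+1$ codimension-one faces, found by traversing the edges in $\down{\sigma}$. Computing $\rchild{\sigma}$ can be folded into the same traversal: since lexicographic order on faces of a common coface is determined by the sorted vertex-value vector, the rightmost (lexicographically largest) codimension-one face is identified by comparing the faces in $\down{\sigma}$, each comparison costing $O(p) = O(d)$ time because a lexicographic comparison examines up to $p+1$ vertex values. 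Each node has $O(d)$ faces, so computing $\rchild{\sigma}$ at one node costs $O(d^2)$ in the worst case if comparisons are naive, but by maintaining a running best-so-far and comparing incrementally this can be kept to $O(d)$ amortized per node, or I would argue the total cost another way.

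The step I expect to be the main obstacle is getting the time accounting for $\rchild$ and $\lparent$ to land at exactly $O(dn)$ rather than a larger bound such as $O(d^2 n)$. The subtlety is that each lexicographic comparison of two $p$-simplices is not $O(1)$ but $O(p)$, and each node may have up to $d+1$ faces and many cofaces, so a careless count gives an extra factor of $d$. The clean resolution is to charge the work to edges of the Hasse diagram rather than to nodes: the total number of incidences (Hasse edges) is $\sum_p (p+1) n_p = O(dn)$, since each $p$-simplex has exactly $p+1$ facets. If I amortize each comparison's cost against the vertex values being read and observe that each incidence is examined a constant number of times, the total is $O(dn)$. I would therefore present the time bound by summing, over all simplices, the cost of a single sweep that simultaneously reads each node's down-edges once to set $\maxh$ and $\rchild$, and symmetrically reads each node's up-edges to set $\lparent$, concluding that the whole decoration runs in $O(dn)$ time.
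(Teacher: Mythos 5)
Your decomposition and accounting are essentially the paper's own proof: constant extra data per node gives the $O(n)$ space bound; $\maxh$, $\rchildName$, and $\lparentName$ are computed by a sweep that brute-forces over $\down{\sigma}$ and $\up{\sigma}$; and the time bound comes from charging comparisons to Hasse edges, of which there are $\sum_p (p+1)n_p = O(dn)$. (A minor difference: the paper computes $\maxh$ in $\Theta(n)$ total by observing that the maximum over just \emph{two} codimension-one faces already equals $\maxh(\sigma)$, whereas you take the maximum over all $p+1$ faces; your version is $O(dn)$, which still suffices for the lemma.)

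However, the obstacle you flagged is genuine, and neither of your proposed escapes closes it. A running best-so-far does not help by itself: comparing the best-so-far against a candidate still costs up to $\Theta(p)$ per comparison, since the descending value vectors of two faces of $\sigma$ can agree on a long prefix; a single node can then cost $\Theta(p^2)$, and a complex with many top-dimensional simplices drives the sweep to $\Theta(d^2n)$. Likewise, ``amortizing against the vertex values being read'' fails because the same values are re-read across many different comparisons, so bounding the number of edge examinations by $O(dn)$ does not bound the number of value reads by $O(dn)$. The missing idea---which the paper's proof also leaves implicit when it charges one comparison per edge---is that every comparison in the sweep is between two faces of a common coface, or two cofaces of a common face, and such simplices differ in exactly one vertex each; the lexicographic order is decided entirely by those two vertices. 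Concretely, for faces $\sigma\setminus\{v_i\}$ and $\sigma\setminus\{v_j\}$, the larger is the one omitting the smaller-valued vertex (so $\rchild{\sigma}$ is the face omitting the minimum-value vertex of $\sigma$), and for cofaces $\sigma * u$ and $\sigma * w$, the smaller is the one adjoining the smaller-valued vertex (so $\lparent{\sigma}$ adjoins the smallest-valued vertex among those extending $\sigma$). To exploit this you need the differing vertex of a Hasse edge in $O(1)$ time, which the standard unlabeled adjacency-list representation does not give directly; one constant-space fix is to store at each node the sum $s(\sigma) = \sum_{v\in\sigma} f_0(v)$ (computable directly in $O(p)$ per node, hence $O(dn)$ total), after which the value of the vertex separating an edge's endpoints is $s(\sigma)-s(\tau)$, retrievable in $O(1)$. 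With comparisons thus reduced to constant time, your edge-charging count of $O(dn)$ comparisons becomes the claimed $O(dn)$ running time.
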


\begin{proof}

We begin by analyzing the space complexity.
For each node, we store a constant amount of additional data.
Thus, the decorated Hasse diagram uses $O(n)$ additional space.

Next, we analyze the time complexity.
To decorate $\hasseK$ for each node $\sigma \in \hasseK$,
we must compute $\maxh(\sigma)$, $\rchild{\sigma}$, and $\lparent{\sigma}$.
Let $p = \dim{\sigma}$. We proceed in three steps.

First we compute $\maxh$.
In general, computing $\maxh(\sigma)$ takes $O(p)$ time, since there may be
no more than $p$ vertices which compose any $\sigma$.
Let $\tau_1$ and $\tau_2$ be distinct codimension one faces of $\sigma$.
Observe that $\maxh(\sigma) = \max(\maxh(\tau_1), \maxh(\tau_2))$.
Thus, if we know the function values for
$\hasseKI{p-1}$, we can compute and store all function values of all
nodes in $\hasseK$ in $\Theta(n)$ time.

Second we compute $\rchild{\sigma}$ by brute force.
We iterate over all edges in $\down{\sigma}$ to find its
largest face under lexicographic ordering.
Since a $p$-simplex $\sigma$ has $p+1$ down edges, computing $\rchild{\sigma}$
for $\sigma$ takes $O(p)$ time.
As $\dim{K} = d$, and there are $n$ nodes,
we can then compute $\rchildName$ for all nodes in $O(dn)$ time.

Third, we compute $\lparentName$, also by brute force.
We iterate over all edges in $\up{\sigma}$ to find its
smallest lexicographical coface.
While we cannot bound $\up{\sigma}$ as easily as $\down{\sigma}$, we do know
that when computing $\lparentName$ we can charge each edge in the Hasse diagram 
for one comparison. Observe that when computing $\rchildName$, we can similarly
charge each comparison to an edge. Then, from computing $\rchildName$, we know
the total number of comparisons is $O(dn)$.
Thus, the total number of comparisons for computing $\lparentName$
is also $O(dn)$.

As each step takes $O(dn)$ time, decorating $\hasseK$ takes $O(dn)$ time.
\end{proof}

\subsection{Algorithm Description}
\begin{algorithm}
    \caption{\erchild}\label{alg:erchild}
    \begin{algorithmic}[1]

        \Require simp.\ complx.\ $K$, injective fcn.\ $f_0: K_0\rightarrow \mathbb{R}$
        \Ensure a GVF consistent with $f_0$

        \State{$\dhasseK \gets $ decorate the Hasse diagram of $K$} \Comment{\shortlemref{decoration}|}
        \label{line:erchild-decorate}

        \State $\tails \gets \emptyset$, $\heads \gets \emptyset$, $\criticals
        \gets \emptyset$, $\matches \gets \emptyset$

        \For {$i = \dim{K}$ to $1$}
        \label{line:erchild-downd}

        \For{$\sigma \in \dhasseKI{i}$}\label{line:erchild-ltor}

        \If{$\sigma$ is assigned}
        \State {\bf continue}
        \EndIf

        \If{$\sigma$ is a left-right parent}\label{line:erchild-leftparent}

        \State Add $\rchild{\sigma}$ to $\tails$; Add $\sigma$ to $\heads$\label{line:erchild-addh}

        \State{Add $(\rchild{\sigma},\sigma)$ to $\matches$}
        \label{line:erchild-defro}

        \State Mark $\sigma$ and $\rchild{\sigma}$ as assigned\label{line:erchild-abass}

        \Else

        \State Add $\sigma$ to $C$\label{line:erchild-addc}
        \State Mark $\sigma$ as assigned\label{line:erchild-cass}

        \EndIf

        \EndFor
        \EndFor
        \label{line:erchild-downdend}

        \State{Add any unassigned zero-simplices to $\criticals$}
        \label{line:erchild-leftovers}\\

        \Return $(\tails, \heads, \criticals, r)$
        \label{line:erchild-return}

    \end{algorithmic}
\end{algorithm}

Next, we describe the main algorithm.
Given a simplicial complex $K$ (represented as a Hasse diagram),
and an injective function $f_0: K_0 \rightarrow \mathbb{R}$,
$\erchild$ computes a GVF consistent with $f_0$.

\algref{erchild} has three main steps.
First, we create a decorated Hasse diagram.
Second, we process each level of the Hasse diagram from top to bottom.
For each unassigned simplex, we check for a left-right parent node,
and use the results to build up a GVF.
Third, we process unassigned zero-simplices. 
See \figref{hasses} for an example.

\begin{figure}[!htb]
        \centering
        \begin{subfigure}[b]{0.35\textwidth}\label{fig:alg1}
        \includegraphics[width=\textwidth]{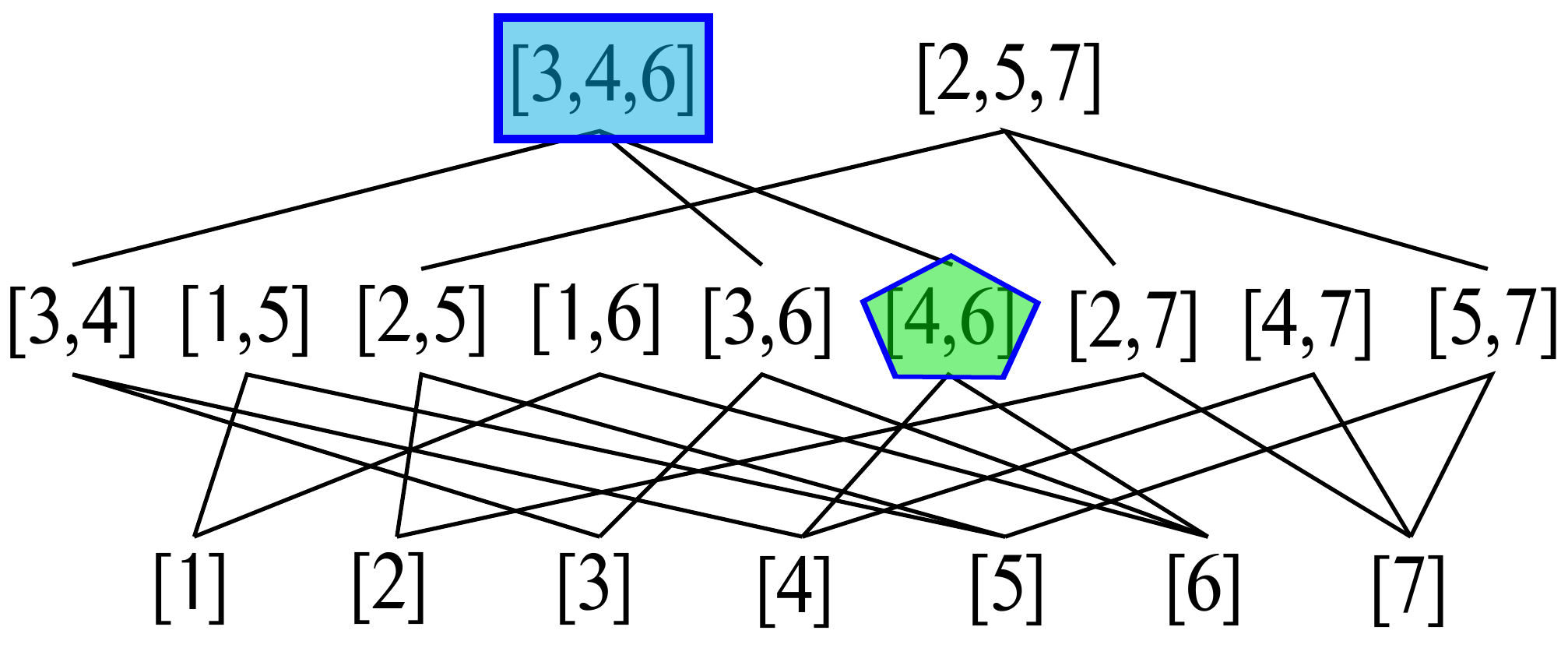}
        \caption{The simplex $[3,4,6]$ is a left-right parent
        because $\lparentName \circ \rchild{[3,4,6]} = \lparentName (4,6)=[3,4,6]$.
        The algorithm
         adds $[3,4,6]$ to $\heads$ and $[4,6]$ to $\tails$.}
	 \end{subfigure}

        \begin{subfigure}[b]{0.35\textwidth}\label{fig:alg2}
        \includegraphics[width=\textwidth]{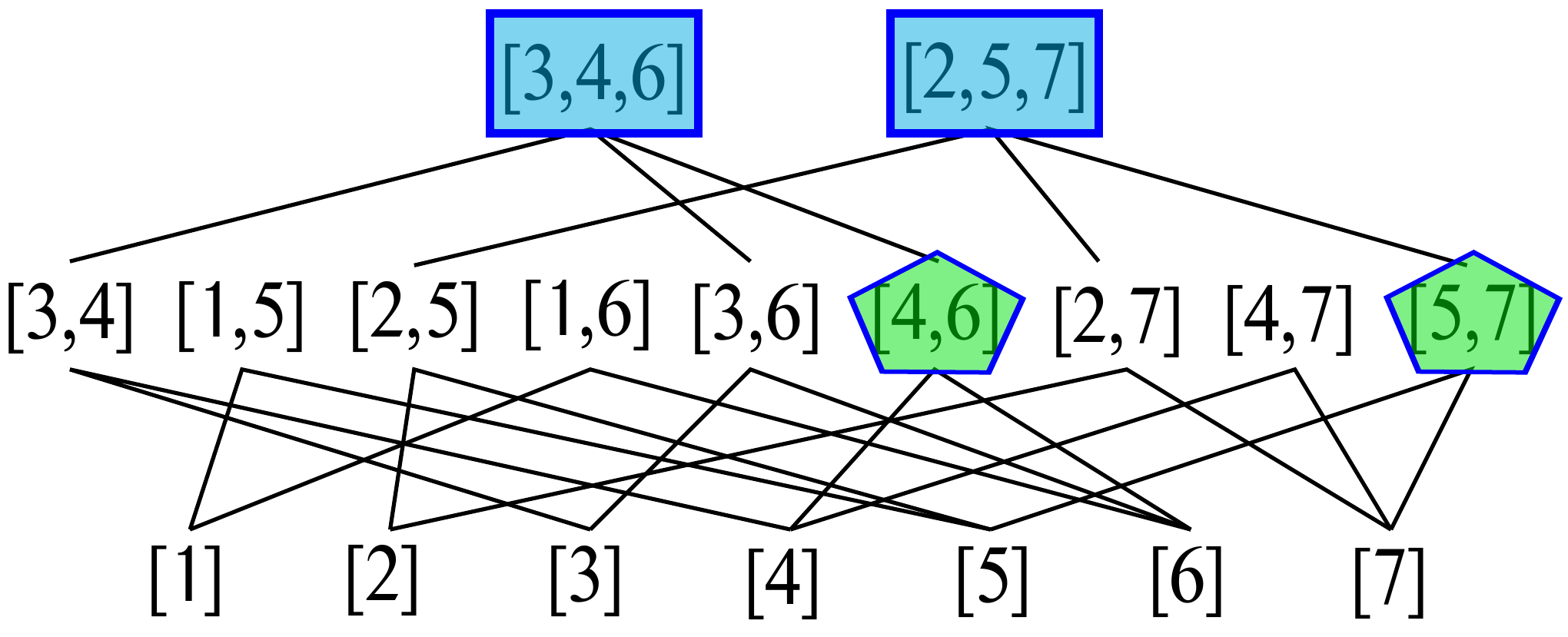}
        \caption{The simplex $[2,5,7]$ is also a left-right parent.
        The algorithm adds $[2,5,7]$ to $\heads$ and $[5,7]$ to \tails.}
        \end{subfigure}

             \begin{subfigure}[b]{0.35\textwidth}\label{fig:alg3}
        \includegraphics[width=\textwidth]{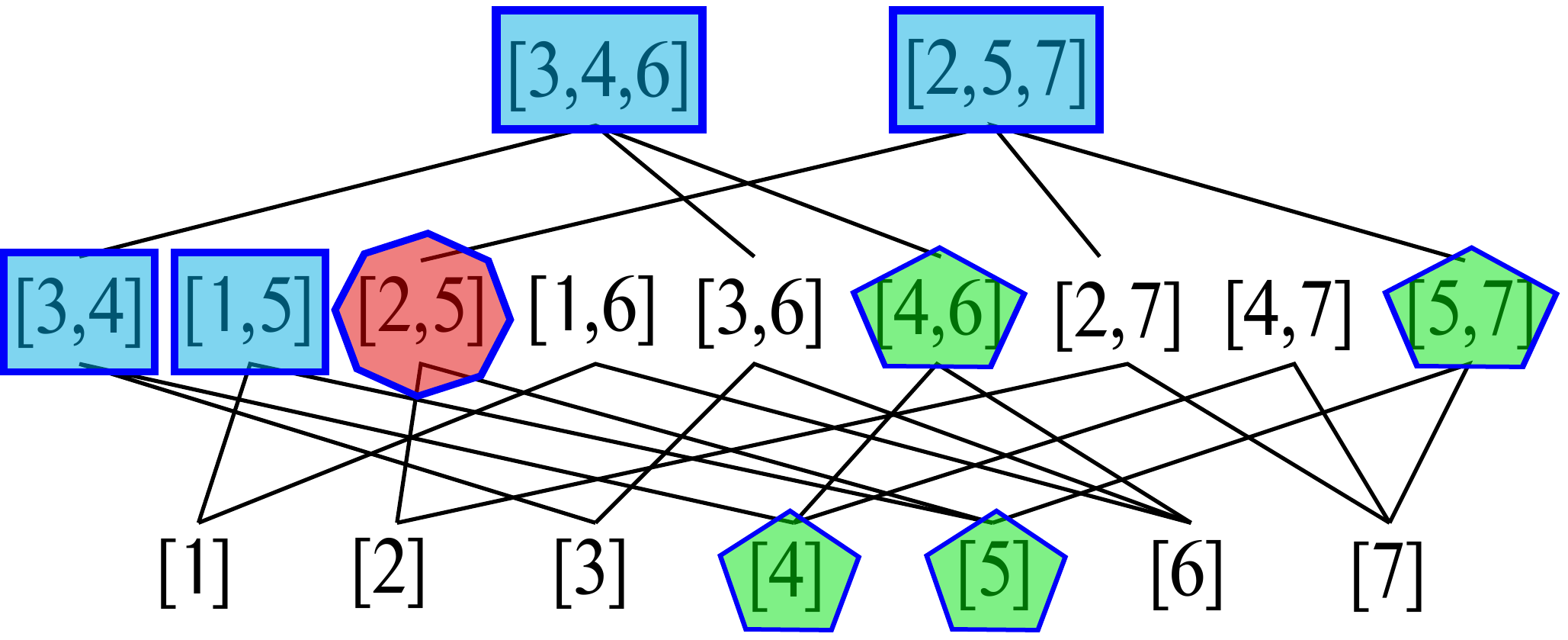}
        \caption{Both $[3,4]$ and $[1,5]$ are left-right parents. 
        The simplex $[2,5]$ is not a left-right parent because
        $\lparentName \circ \rchild{[2,5]} = [1,5] \neq[2,5]$. The algorithm
        adds $[2,5]$ to \criticals.}
        \end{subfigure}
   
               \begin{subfigure}[b]{0.35\textwidth}\label{fig:alg4}
        \includegraphics[width=\textwidth]{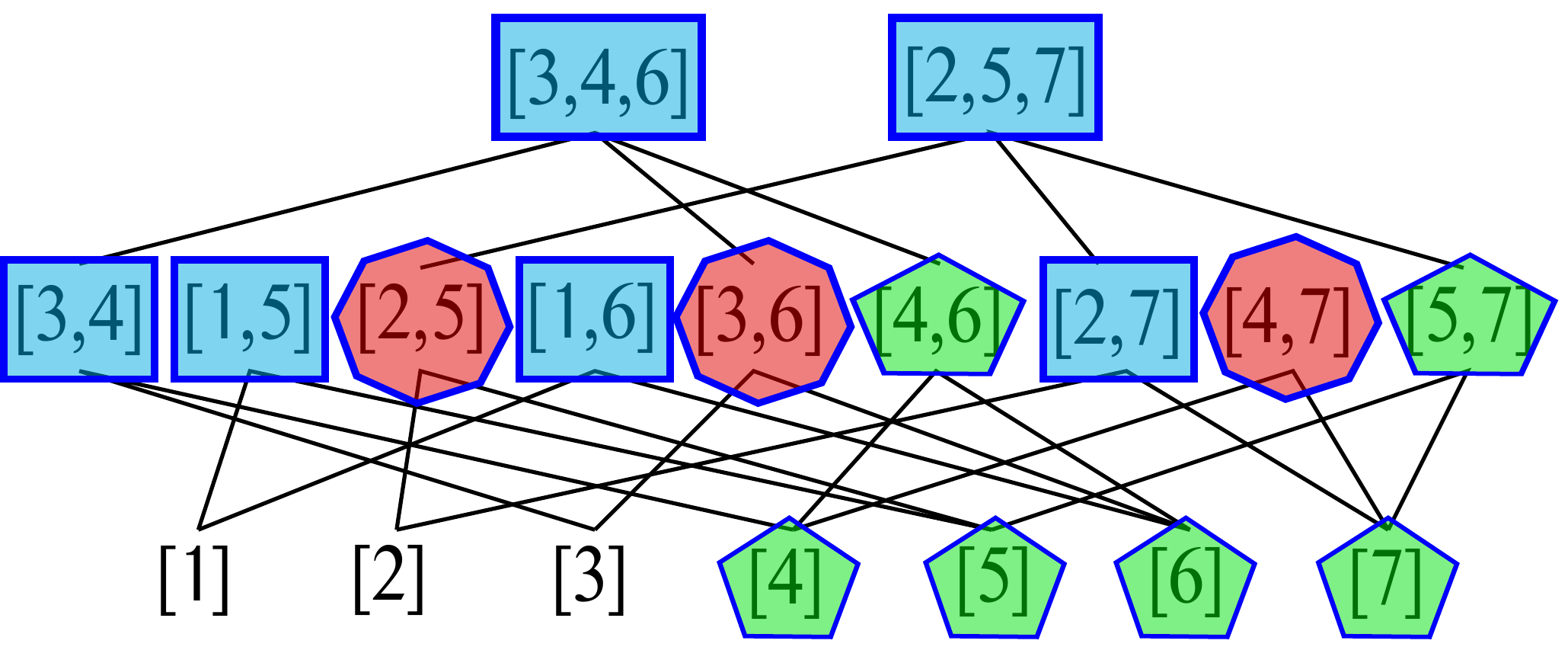}
        \caption{The loop in \shortlineref{downd}{erchild} is complete.}
        \end{subfigure}
        
               \begin{subfigure}[b]{0.35\textwidth}\label{fig:alg5}
        \includegraphics[width=\textwidth]{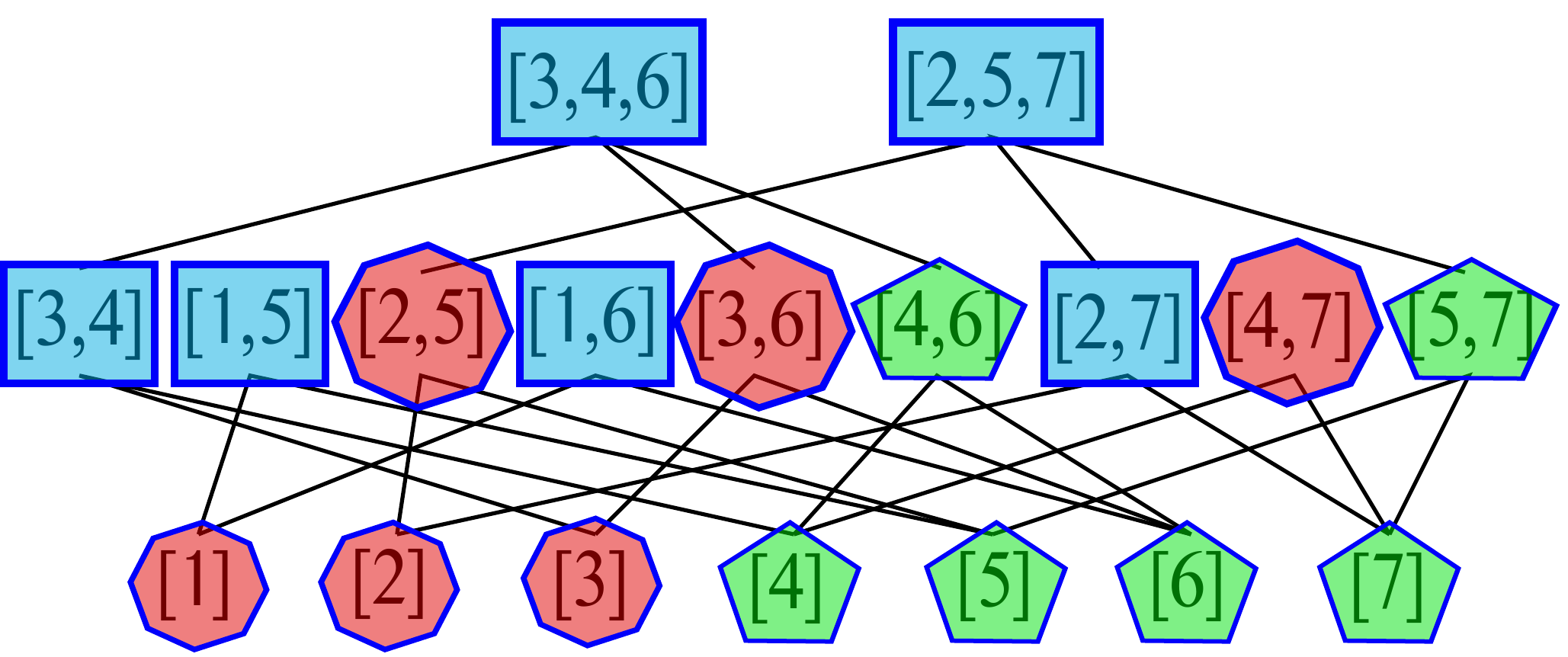}
        \caption{The algorithm adds unassigned vertices to $\criticals$.}

        \end{subfigure}

		\caption{Here we see a visualization of \algref{erchild}, \erchild\, on the complex shown in \figref{cat}. 
		\algref{erchild} partitions the nodes of the Hasse diagram into three sets, $\heads, \tails$ and $\criticals$.
		Elements of $\heads$ are represented by blue rectangles, elements of $\tails$ by green pentagons
		and elements of $\criticals$ by red hexagons.}
			\label{fig:hasses}
\end{figure}

\subsection{Analysis of \erchild}

For the remainder of this section, we prove that \algref{erchild} (\erchild) is
equivalent to and faster than \algref{eraw} (\eraw).  For the following lemmas,
let $K$ be a simplicial complex, let $f_0
\colon K_0 \to \R$ be an injective function, and
let~$\gvf$ be the output of \callerchild.

First, we show that $\gvf$ is a partition of $K$.
\begin{lemma}[Partition]\label{lem:partition}
    The sets $\heads$, $\tails$, and $\criticals$ partition $K$.
\end{lemma}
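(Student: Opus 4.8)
The plan is to show that every simplex of $K$ lands in exactly one of the three sets $\heads$, $\tails$, $\criticals$. I would argue this by carefully tracking the \emph{assigned} flag that \algref{erchild} maintains. The key invariant is that a simplex is added to exactly one of the three sets precisely when it is marked assigned, and once assigned it is never touched again (because of the \textbf{continue} guard in the inner loop). So the proof reduces to two claims: (1) every simplex is eventually marked assigned, and (2) no simplex is added to two different sets.

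For coverage (claim 1), I would observe that the double loop in \lineref{downd}{erchild} through \lineref{downdend}{erchild} visits every node of $\dhasseKI{i}$ for each $i$ from $\dim{K}$ down to $1$, so every simplex of positive dimension is examined at least once. When a positive-dimensional simplex $\sigma$ is examined and is still unassigned, it is marked assigned in that iteration: either it is a left-right parent and gets added to $\heads$ (\lineref{abass}{erchild}), or it falls to the \textbf{else} branch and gets added to $\criticals$ (\lineref{cass}{erchild}). The subtlety is the tails: a simplex $\rchild{\sigma}$ may be marked assigned on \lineref{abass}{erchild} when we process its parent $\sigma$, which happens in an \emph{earlier} (higher-dimensional) pass, so when the loop later reaches $\rchild{\sigma}$ itself it hits the \textbf{continue} and is skipped — correctly, since it is already placed in $\tails$. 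Finally, \lineref{leftovers}{erchild} sweeps any unassigned zero-simplices into $\criticals$, so the dimension-zero simplices that were never claimed as a rightmost child are also covered. Thus every simplex ends up assigned to some set.

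For disjointness (claim 2), the \textbf{continue} guard does the work: a simplex can only be added to a set during an iteration in which it is unassigned, and the very same iteration marks it assigned. Hence each simplex is added on at most one iteration. Within a single iteration the branches are mutually exclusive — the left-right-parent branch adds $\sigma$ to $\heads$ and $\rchild{\sigma}$ to $\tails$, while the else branch adds $\sigma$ to $\criticals$, and the leftover step only touches vertices that are still unassigned — so no simplex is placed into two sets at once. The one place that needs care is confirming that a simplex added to $\tails$ as some parent's rightmost child is genuinely distinct from the parent added to $\heads$ and is not simultaneously claimed by the else branch: since $\rchild{\sigma}$ is marked assigned the instant it is added to $\tails$, by the time its own iteration arrives it is skipped and can never reach \criticals.

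The main obstacle, and the step I would be most careful about, is ruling out a \emph{double head/tail collision}: could a single simplex $\rho$ be added to $\tails$ as the rightmost child of two different left-right parents $\sigma_1$ and $\sigma_2$? This is where part~\eqref{part:eraw-composition} of \thmref{eraw} (via the $\lparentName \circ \rchildName$ structure) matters — a left-right child has a uniquely determined parent $\lparent{\rho}$, so two distinct parents cannot both claim $\rho$, and in any case the assigned flag would prevent the second assignment even if one tried. I would state this distinctness explicitly and then conclude that the three sets are pairwise disjoint and cover $K$, which is exactly the assertion that they partition $K$.
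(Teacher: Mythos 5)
Your proof is correct and takes essentially the same approach as the paper's: a case analysis driven by the top-down loop structure and the assigned flag, with positive-dimensional simplices placed into $\heads$ or $\criticals$ in their own iteration, tails claimed by their parents in an earlier pass, and unassigned zero-simplices swept into $\criticals$ at the end. If anything, your handling of disjointness is more careful than the paper's, which leaves implicit the point you make explicitly — that two distinct left-right parents cannot claim the same rightmost child because $\lparentName$ is a function.
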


\begin{proof}
By \shortlineref{downd}{erchild} and \lineref{ltor}{erchild}, \erchild\ iterates over all $\sigma^d \in K$ with $d > 0$ once.
Each $\sigma$ is either assigned or unassigned.
If $\sigma$ is unassigned, there are two options; $\sigma$ may be a left-right parent, or it may not be.
If $\sigma$ is a left-right parent, \shortlineref{addh}{erchild} ensures that $\sigma$ is put into $\heads$.
Otherwise, \shortlineref{addc}{erchild} ensures that $\sigma$ is put into $\criticals$.
If $\sigma$ is assigned, then $\sigma$ was assigned to $\tails$ in  \shortlineref{addh}{erchild}.
Thus, every  $\sigma^d \in K$ with $d > 0$ must be assigned to exactly one of $\heads, \tails,$ or $\criticals$.
Then, every $\sigma^0$ is again either assigned or unassigned. If assigned, $\sigma^0\in \tails$.
 If unassigned, $\sigma^0$ is added to $\criticals$ in \shortlineref{leftovers}{erchild}.
Thus, every $\sigma \in K$ is assigned one of $\heads, \tails,$ or $\criticals$, making $\heads, \tails,$
  and $\criticals$ partition $K$.
\end{proof}

 We will show that $\gvf$ satisfies \partinref{eraw}{gvf}, \partinref{eraw}{ifftails}, and
   \partinref{eraw}{composition} of \thmref{eraw}. Later in this section, we show that any GVF with these properties is unique.
   
   First, we show \partinref{eraw}{composition} and one direction of \partinref{eraw}{ifftails}.

\begin{lemma}[Child Heads are Parents]\label{lem:wefindmatch}
    Let $\sigma \in \heads$. Then,
    $\sigma$ is a left-right parent and $\matches(\rchild{\sigma})=\sigma$.
\end{lemma}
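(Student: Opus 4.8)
The plan is to prove Lemma~\ref{lem:wefindmatch} by tracing exactly how a simplex enters $\heads$ in \algref{erchild}. The key observation is that a simplex $\sigma$ is added to $\heads$ only in \lineref{addh}{erchild}, and this line is reached only when $\sigma$ is unassigned at the time it is processed and passes the test on \lineref{leftparent}{erchild} that $\sigma$ is a left-right parent. So the first half of the claim, that every $\sigma \in \heads$ is a left-right parent, is essentially immediate from the control flow: the \textbf{only} way into $\heads$ is through the branch guarded by the left-right-parent condition.

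For the matching claim, I would again appeal to \lineref{defro}{erchild}, where $(\rchild{\sigma}, \sigma)$ is added to $\matches$ in the same branch that places $\sigma$ in $\heads$. Since the matching $\matches$ is only ever modified on this line, any $\sigma \in \heads$ was inserted together with the pair $(\rchild{\sigma}, \sigma)$, giving $\matches(\rchild{\sigma}) = \sigma$. The bookkeeping I must be careful about is that $\matches$ genuinely is a function, i.e.\ that no two distinct heads $\sigma, \sigma'$ compete for the same child or that $\sigma$ is never re-added; this is handled by the ``assigned'' marking on \lineref{abass}{erchild}, which flags both $\sigma$ and $\rchild{\sigma}$ so that the \textbf{continue} guard prevents either from being revisited or reassigned. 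I would state this invariant explicitly: once a node is marked assigned it is skipped on every subsequent iteration.

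The step I expect to be the main obstacle is ruling out the degenerate possibility that $\rchild{\sigma}$ was \emph{already} assigned (to some other head, or to $\criticals$) before $\sigma$ is processed, which would make the assignment on \lineref{abass}{erchild} either redundant or inconsistent. To close this gap I would use the dimension structure of the main loop, \lineref{downd}{erchild}, which processes simplices from dimension $\dim{K}$ down to $1$: when $\sigma \in \dhasseKI{i}$ is processed, its rightmost face $\rchild{\sigma} \in \dhasseKI{i-1}$ lives strictly one level below and has not yet been reached by the outer loop, so it cannot yet have been placed into $\heads$ or $\criticals$ on its own behalf. The only way $\rchild{\sigma}$ could be pre-assigned is as a tail of some \emph{other} head $\sigma'$ at level $i$; here I would argue that because $\sigma$ itself is unassigned and we claim $\rchild{\sigma}$ as its tail, the consistency of the left-right-parent condition (namely $\lparentName \circ \rchild{\sigma} = \sigma$, so $\sigma$ is the unique leftmost parent claiming $\rchild{\sigma}$) guarantees no competing $\sigma'$ exists. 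This is the crux, and it is exactly where the left-right-parent definition does the real work.
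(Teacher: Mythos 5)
Your proposal is correct and follows essentially the same route as the paper's own proof: both arguments rest on the control flow of \algref{erchild}, namely that \lineref{addh}{erchild} is the only entry point into $\heads$ and sits inside the branch guarded by the left-right-parent test, with \lineref{defro}{erchild} adding $(\rchild{\sigma},\sigma)$ to $\matches$ in that same branch. Your third paragraph (ruling out a competing head claiming $\rchild{\sigma}$, via uniqueness of the leftmost coface) is sound extra rigor, but the paper does not include it in this lemma's proof; it defers such well-definedness of $\matches$ to \lemref{partition} and \lemref{assignment}.
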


\begin{proof}
    Recall that $\heads$ is the second output of
    \callerchild, given in \algref{erchild}.
    As \shortlineref{addh}{erchild} is the only step in which simplices are
    added to~$\heads$ and is within an $if$ statement
    that checks if $\sigma$ is a left-right parent, $\sigma$ must
    be a left-right parent.
    Also within the $if$ statement,
    \shortlineref{defro}{erchild} adds
    $(\rchild{\sigma},\sigma)$ to $\matches$,
    which means that $\matches(\rchild{\sigma}) = \sigma$.
\end{proof}

Now we show the reverse direction of \partinref{eraw}{ifftails}. 

\begin{lemma}[Child Parents are Heads]\label{lem:wefindtails}
    Let $\sigma \in K$.  If $\sigma$ is a left-right parent, then $\sigma \in
    \heads$.
\end{lemma}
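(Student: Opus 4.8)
The plan is to trace the single iteration of \erchild\ in which $\sigma$ is processed and show that $\sigma$ is necessarily unassigned at that moment; once that is established, the test on \shortlineref{leftparent}{erchild} succeeds (by hypothesis $\sigma$ is a left-right parent) and \shortlineref{addh}{erchild} places $\sigma$ in $\heads$. Since a simplex is marked as a head or as critical only during its own iteration, the only way $\sigma$ could be assigned beforehand is as a tail. Because the outer loop on \shortlineref{downd}{erchild} runs from $\dim{K}$ down to $1$, the only simplices that could have marked $\sigma$ as a tail are cofaces of dimension $\dim{\sigma}+1$ processed earlier; concretely, $\sigma$ would be a tail exactly when $\sigma = \rchild{\beta}$ for some $\beta \in \heads$ with $\dim{\beta} = \dim{\sigma}+1$. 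By \lemref{wefindmatch}, such a $\beta$ is a left-right parent, so $\lparent{\sigma} = \lparent{\rchild{\beta}} = \beta$ and hence $\rchild{\lparent{\sigma}} = \sigma$; that is, $\sigma$ is a left-right \emph{child}. Thus it suffices to prove that no simplex is simultaneously a left-right parent and a left-right child.

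To prove this, I would first record explicit descriptions of the two operations. Writing $\sigma = [v_0, v_1, \ldots, v_p]$ with $f_0(v_0) < \cdots < f_0(v_p)$, comparing the reversed value-vectors shows that the lexicographically largest face is the one retaining the top vertices, so $\rchild{\sigma} = [v_1, \ldots, v_p]$ is obtained by deleting the minimum-value vertex. Dually, for a simplex $\tau$, any dimension-one coface that adds a vertex above $\maxh(\tau)$ is lexicographically larger than one adding a vertex below it, and among the latter the minimum is attained by the smallest added vertex; hence $\lparent{\tau} = \tau \cup \{w\}$, where $w$ is the $f_0$-minimum vertex with $\tau \cup \{w\} \in K$.

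The crux is then a short argument using that $K$ is closed under taking faces. Suppose, for contradiction, that $\sigma$ is both a left-right parent and a left-right child. Put $\tau = \rchild{\sigma} = [v_1, \ldots, v_p]$. Being a left-right parent gives $\lparent{\tau} = \sigma$, so by the description above $v_0$ is the $f_0$-minimum vertex addable to $\tau$. Being a left-right child gives a coface $\beta = \lparent{\sigma}$ with $\rchild{\beta} = \sigma$; writing $\beta = \sigma \cup \{u\}$, the fact that $\rchild{\beta}$ deletes $\beta$'s minimum vertex forces $f_0(u) < f_0(v_0)$. Since $\beta \in K$ and $K$ is a \simp, the face $\tau \cup \{u\} = [u, v_1, \ldots, v_p]$ also lies in $K$, so $u$ is a vertex addable to $\tau$ with $f_0(u) < f_0(v_0)$, contradicting that $v_0$ is the minimum such vertex. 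Hence no simplex is both a left-right parent and a left-right child, $\sigma$ is unassigned when the loop reaches it, and therefore $\sigma \in \heads$.

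The main obstacle is isolating and proving the disjointness claim — that left-right parents and left-right children never coincide — since everything else is bookkeeping about the top-down loop order together with \lemref{wefindmatch}. I expect the delicate point to be justifying the two explicit descriptions of $\rchild{}$ and $\lparent{}$ carefully enough that the single closure inference, $\beta \in K \Rightarrow \tau \cup \{u\} \in K$, cleanly produces the contradiction.
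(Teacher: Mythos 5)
Your proof is correct and takes essentially the same route as the paper's: the paper argues the contrapositive, noting that a simplex placed in $\criticals$ failed the left-right-parent test outright, while a simplex placed in $\tails$ satisfies $\sigma = \rchild{h}$ for a head $h = [v_{-1}, v_0, \ldots, v_d]$, so the face $\xi = [v_{-1}, v_1, \ldots, v_d]$ of $h$ lies in $K$ and forces $\lparent{\rchild{\sigma}} \leq_{lex} \xi \lexlt \sigma$. Your disjointness claim — that no simplex is simultaneously a left-right parent and a left-right child — is precisely this tails-case argument (the same appeal to closure under faces and lexicographic minimality of $\lparentName$), just repackaged as a standalone combinatorial fact fed into an execution trace rather than a case analysis on the output partition.
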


\begin{proof}
    Recall that in order for $\sigma$ to be a left-right parent, we must have $\lparent{\rchild{\sigma}} = \sigma$.
    Now, we consider two cases. For the first case,
    suppose $\sigma \in C$. Then $\sigma$ is added to $C$
    in \lineref{addc}{erchild} when $\rchild{\sigma}$ must already be assigned to
    $h \lexlt \sigma$. So, $\lparent {\rchild{\sigma}}=h\neq \sigma$ and $\sigma$
    is not a left-right parent.

    For the second case, suppose $\sigma=[v_0,v_1,\ldots,v_d] \in \tails$.
    Then $\sigma$ is added
    to $\tails$ in \lineref{addh}{erchild} where $\sigma=\rchild{h}$
    for some $h=[v_{-1},v_0,\ldots, v_d]\in \heads$ with $f_0(v_{-1})<f_0(v_0)$.
    Notice that $\xi=[v_{-1},v_1,v_2,\ldots, v_d]$ is a face of $h$ and 
    $\xi \lexlt \sigma$.
     Then,
    $\lparent {\rchild{\sigma}}=\lparent{[v_1,\ldots,v_d]}\leq_{lex}[v_{-1},v_1,v_2,\ldots, v_d]
    \lexlt [v_0,v_1,\ldots,v_d]=\sigma$
    and $\sigma$ is not a left-right parent.

    Thus, if $\sigma$ is a left-right parent, then $\sigma \in \heads$.

\end{proof}

To see $\gvf$ satisfies \partinref{eraw}{gvf} we have the following lemma:

\begin{lemma}[Consistency]\label{lem:assignment}
    The tuple $\gvf$ is a gradient vector
    field consistent with $f_0$.
\end{lemma}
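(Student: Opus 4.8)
The plan is to split the statement into its two halves: first that $\gvf$ is a bona fide gradient vector field in the combinatorial sense, and then that it is \emph{consistent} with $f_0$ in the sense of Definition~\ref{def:consistent}. For the first half I would combine \lemref{partition}, which shows that $\heads,\tails,\criticals$ partition $K$, with \lemref{wefindmatch} and \lemref{wefindtails}. Together these give that a simplex is a head exactly when it is a left-right parent, that $\matches(\rchild{\sigma})=\sigma$ for every head $\sigma$, and hence that $\sigma \mapsto \rchild{\sigma}$ is a bijection from $\heads$ to $\tails$ with inverse $\tau \mapsto \lparent{\tau}$. Since each pair $(\rchild{\sigma},\sigma)$ is a simplex together with a codimension-one coface, $\gvf$ is exactly the data of a GVF, and it remains to produce, for each $\eps>0$, a discrete Morse function $f$ that realizes it.

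To build $f$, I would first observe that $\rchild{\sigma}$ is obtained from $\sigma$ by deleting its \emph{minimum} vertex, so $\maxh(\rchild{\sigma}) = \maxh(\sigma)$ and every matched pair lies in a single \emph{lower star} $L_v := \{\sigma : \maxh(\sigma)=f_0(v)\}$. Partitioning $K$ into these lower stars reduces the global Morse inequalities to each $L_v$ in isolation: if I place the values of $L_v$ inside a window about $f_0(v)$ of radius $\eps' := \tfrac{1}{2}\min(\eps, g)$, where $g$ is the smallest gap between distinct values of $f_0$, then the windows are pairwise disjoint, and any face (resp.\ coface) of $\sigma$ lying in a different lower star has strictly smaller (resp.\ larger) $f$-value and so contributes neither to a Morse count nor to the induced field. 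Thus it suffices to assign values inside each $L_v$ so that the matched pairs are exactly the covering relations along which $f$ fails to increase, while $v$ itself receives value exactly $f_0(v)$, yielding $f|_{K_0}=f_0$.

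The device that makes such a local assignment possible is acyclicity, which I would prove directly from the lexicographic structure. Along any gradient path the tails are strictly lexicographically decreasing: each step leaves a head $\sigma_i=\matches(\tau_i)$ through a codimension-one face $\tau_{i+1}\neq \rchild{\sigma_i}=\tau_i$, and since $\rchild{\sigma_i}$ is by definition the lexicographically largest such face, $\tau_{i+1} \lexlt \tau_i$. A strictly decreasing sequence cannot close up, so $L_v$ carries no nontrivial closed gradient path. I would then orient the Hasse diagram of $L_v$ with the matched edges reversed, take a linear extension (which exists precisely by this acyclicity), and read off distinct real values inside the window so that each matched tail sits just above its head while every other covering relation increases, pinning $f(v)=f_0(v)$. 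Because the resulting global $f$ is an honest real-valued function, no separate global acyclicity check is needed; its induced GVF is $\matches$, it satisfies $|f(\sigma)-\maxh(\sigma)|\le\eps$ and $f|_{K_0}=f_0$, and since $\eps$ was arbitrary, $\gvf$ is consistent.

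I expect the main obstacle to be the local verification that the covering relations of $L_v$ along which $f$ does not increase are \emph{precisely} the matched pairs, with no spurious exceptions. Concretely, one must show that for a head $\sigma$ its rightmost child is its only face with a larger value and, dually, that for a tail its leftmost parent is its only coface with a smaller value; this is exactly where the rightmost-child/leftmost-parent design of the algorithm is essential, and it dovetails with the lexicographic acyclicity argument above. The disjoint-window reduction and the lexicographic monotonicity are the two ideas that keep the remainder of the verification routine.
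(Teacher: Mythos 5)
Your proposal is correct, but it takes a genuinely different route from the paper's. The paper proves consistency by writing down $f$ explicitly and globally: it processes dimensions from the bottom up, initializes $f(\sigma)$ to the maximum of the values of its faces, and then perturbs by signed dyadic multiples of $\delta = \min\{\eps, \min_{v,w \in K_0}|f_0(v)-f_0(w)|\}$ (subtracting an offset when $\sigma$ is a left-right parent, adding one otherwise, with an extra bump for left-right children), after which the Morse inequalities, $\matches=\matches'$, and \partdefref{consistent}{eps} are checked by direct computation. You instead decompose $K$ into lower stars, observe that matched pairs never cross lower stars (because $\rchild{\sigma}$ is obtained by deleting the \emph{minimum} vertex, so $\maxh(\rchild{\sigma})=\maxh(\sigma)$ --- a correct and nicely exploited fact), reduce the problem to disjoint value windows, prove acyclicity via strict lexicographic decrease of the tails along gradient paths, and then invoke the classical acyclic-matching-to-Morse-function construction (topological order of the matched-edge-reversed Hasse diagram) inside each window. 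Your route is essentially a localized Chari--Forman argument: it exposes \emph{why} the left-right matching is a gradient field (the rightmost-child choice forces lexicographic descent), and the acyclicity lemma is of independent interest. The paper's route buys self-containedness and an explicit formula for $f$, with no need for the linear-extension machinery.

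Two points need tightening. First, a linear extension requires that the matched-edge-reversed Hasse diagram of the lower star has no directed cycles \emph{at all}, whereas your lexicographic argument only rules out closed gradient paths, i.e.\ cycles alternating between two consecutive dimensions. The two notions do coincide for matchings, but you should say why: around any directed cycle the number of up-steps (matched edges) equals the number of down-steps (unmatched edges), and no two up-steps can be consecutive since a head is never also a tail; hence every directed cycle alternates perfectly, stays in two consecutive dimensions, and is therefore a closed gradient path of the kind you excluded. Second, with window radius exactly $\tfrac{1}{2}\min(\eps,g)$, two adjacent closed windows can share an endpoint, which in the worst case yields $f(\tau)=f(\sigma)$ for a covering pair crossing lower stars and hence a spurious (non-strict) pair in the induced field; take the radius strictly smaller than half the gap so that all cross-star covering relations are strictly increasing.
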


\begin{proof}
    Let $\eps>0$ and $d=\dim{K}$.
    Let $\gvf =$ \callerchild.
    We define
    $$\delta :=   \min \{\eps, \min_{v,w \in K_0} |f(v) - f(w)| \}.$$
    We define $f \colon K \to \R$ recursively as follows: for all
    vertices $v \in K_0$, define $f(v) := f_0(v)$.
    Now,
    assume that~$f$ is defined on the $i$-simplices, for some $i \geq 0$.
    For each~$\sigma \in K_{i+1}$, we initially assign~$f(\sigma)=   \max_{\tau \prec
    \sigma} f(\tau)$, then we update:
    \begin{equation}\label{eq:addlrmatch}
        f(\sigma) = f(\sigma) +
        \begin{cases}
            - \lrmatchoffset & \text{if $\sigma$ is a left-right parent;} \\
            \lrmatchoffset & \text{otherwise,}
        \end{cases}
    \end{equation}
    where~$j$ is the index
    of~$\sigma$ in the lexicographic ordering of all simplices.
    We make one final update:
    \begin{equation}\label{eq:addlrchild}
        f(\sigma) = f(\sigma) +
        \begin{cases}
            \lrchildoffset & \text{if $\sigma$ is a left-right child;} \\
            0 & \text{otherwise.}
        \end{cases}
    \end{equation}
    We need to show that~$\gvf$ and $f$ satisfy the three properties in
    \defref{consistent}.

    First, we show \partdefref{consistent}{gvf} holds for $f$ as defined above
    (that~$\gvf$ is the GVF corresponding to~$f$).
    Let~$\othergvf$ be the GVF corresponding to $f$. Since $\heads$, $\tails$,
    $\criticals$ partitions $K$ by \lemref{partition}, it suffices to show that
    $\matches$ is a bijection and $\matches=\matches'$. The only time that simplices are added
    to $\heads$ or $\tails$ happens directly alongside when pairs are added to $\matches$ in
    lines~\ref{line:erchild-addh} and~\ref{line:erchild-defro}, forcing
    that~$\matches$ must be a match.

    Let $(\tau,\sigma) \in \matches$.
    Let $i = \dim{\sigma}$.
    By \lemref{wefindtails}, $\sigma$ is a left-right parent and $\tau =
    \rchild{\sigma}$, which means that $(\tau,\sigma)$ is a left-right pair.
    We follow the computation of~$f(\sigma)$.
    Since $(\tau, \sigma)$ is a left-right pair,
    $\tau$ is the rightmost face of $\sigma$,
    which means $f(\sigma)$ is initialized to $f(\tau)$.
    Since $\sigma$ is a left-right parent, $f(\sigma)$ is
    updated by~\eqref{addlrmatch} to
    $f(\sigma) = f(\sigma) - \lrmatchoffset$. Since~$\sigma$ is not a left-right
    child, nothing changes in \eqref{addlrchild}. Thus,~$f(\sigma) < f(\tau)$.
    Next, let $\tau' \prec \sigma$ such that $\tau' \neq \tau$ and
    $\dim{\tau'}=i-1$. We follow the
    computation of $f(\tau')$.
    Since~$\tau$ is the only face of $\sigma$ that is a left-right
    child, for any other $\tau' \prec \sigma$,
    \eqref{addlrchild}, adds zero to the definition of $f(\tau')$.
    Recalling that \eqref{addlrchild} adds $\lrchildoffsetlowerdim$ to the definition
    of~$f(\tau)$, we find that
    $f(\tau) \geq f(\tau') + \lrchildoffsetlowerdim$, and
    $$
        f(\sigma)= f(\tau) - \lrmatchoffset
            \geq f(\tau') + \lrchildoffsetlowerdim - \lrmatchoffset \geq
            f(\tau').
    $$
    Because $\sigma$ may be any arbitrary left-right parent, we can 
    guarantee that the above inequality is valid for any $(\sigma,\tau)
    \in \matches$ when related to any other faces of $\sigma$. Thus, $f$
    is discrete Morse, since it is impossible for $f$ to violate the 
    inequality given in \defref{morsedefn}. 
    
    Since $f(\tau) > f(\sigma)$ and $f$ is a discrete Morse function,
    we obtain~$(\tau,\sigma) \in \matches'$.  Each of these statements are biconditional, so
    we have shown that $\matches = \matches'$.

    \partdefref{consistent}{restr} ($f|_{K_0}=f_0$) holds trivially.

    Finally, we show \partdefref{consistent}{eps} holds
    (that $|f(\sigma) - \max_{v \in \sigma}f_0(v)| \leq \eps$).
    By construction,
    $$
        |f(\sigma - \max_{v \in \sigma}f_0(v)|
            \leq \left( \sum_{i=1}^d 2^{-i} \right) \delta
            = (1-2^{-d})\delta
            < \eps.
    $$
\end{proof}

Properties \partinref{eraw}{gvf}, \partinref{eraw}{ifftails}, and
   \partinref{eraw}{composition} are quite restrictive. In fact, they
   uniquely determine a GVF, as we now show.

\begin{theorem}[Unique GVF]\label{thm:unique}
    Let~$K$ be a simplicial complex and let $f_0 \colon K_0 \to \R$ be an
    injective function.
    There is exactly one gradient vector field, $\gvf$, with the following
    two~properties:
    \begin{enumerate}[(i)]
        \item $\gvf$ is consistent with~$f_0$.\label{part:unique-gvf}
        \item For all $\sigma \in K$,  $\sigma \in \heads$ if and only if
            $\sigma$ is a left-right~parent.\label{part:unique-ifftails}
        \item For all $\sigma \in \heads$,
            $\matches(\rchild{\sigma})=\sigma$.\label{part:unique-define}
    \end{enumerate}
\end{theorem}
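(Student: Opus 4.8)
The plan is to split the argument into existence and uniqueness and to spend essentially all of the effort on the latter, since existence is already in hand. For existence, I would simply collect the lemmas proved above: the output $\gvf$ of \callerchild\ is a gradient vector field whose three sets partition $K$ (\lemref{partition}) and that is consistent with $f_0$ (\lemref{assignment}), which gives \partinref{unique}{gvf}; \lemref{wefindmatch} together with \lemref{wefindtails} gives both directions of \partinref{unique}{ifftails}; and \partinref{unique}{define} is exactly the matching statement in \lemref{wefindmatch}. Thus at least one GVF with the three properties exists.

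For uniqueness, I would take an \emph{arbitrary} gradient vector field $\gvf$ satisfying \partinref{unique}{gvf}--\partinref{unique}{define} and show that each of its four components is forced to be an explicit function of $K$ and $f_0$ alone. First, \partinref{unique}{ifftails} identifies $\heads$ with the set of left-right parents. Being a left-right parent is defined entirely through $\rchild{\cdot}$, $\lparent{\cdot}$, and the lexicographic order $\lexlt$, i.e.\ through $K$ and $f_0$, with no reference to the GVF; hence $\heads$ is determined. Next, \partinref{unique}{define} states that $\matches(\rchild{\sigma})=\sigma$ for every $\sigma\in\heads$, so the assignment $\sigma\mapsto\rchild{\sigma}$ is a right inverse of $\matches$. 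Since $\matches\colon\tails\to\heads$ is a bijection, this right inverse must coincide with $\matches^{-1}$; therefore $\tails=\{\rchild{\sigma}:\sigma\in\heads\}$ and the bijection $\matches$ are both determined. Finally, because $\heads,\tails,\criticals$ partition $K$, we get $\criticals=K\setminus(\heads\cup\tails)$. As all four components depend only on $K$ and $f_0$, any two GVFs satisfying the hypotheses agree, proving uniqueness.

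The step that needs care -- and that I expect to be the only real obstacle -- is the passage from \partinref{unique}{define} to the determination of $\tails$ and $\matches$. Property \partinref{unique}{define} a priori only pins down the value of $\matches$ on rightmost children of heads; to conclude that there are no other tails and that $\matches$ is completely fixed, I must use that $\matches$ is a bijection. In particular I would note that $\sigma\mapsto\rchild{\sigma}$ is injective on $\heads$ -- two distinct heads with the same rightmost child would force $\matches$ to send one tail to two heads, contradicting that $\matches$ is a function -- and that its image exhausts $\tails$ because $\matches$ is onto $\heads$. Once this is established the remaining bookkeeping is immediate. I would also remark that uniqueness in fact uses only \partinref{unique}{ifftails} and \partinref{unique}{define}; consistency, \partinref{unique}{gvf}, is needed only to furnish existence.
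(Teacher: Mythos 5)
Your proposal is correct and takes essentially the same route as the paper's proof: \partinref{unique}{ifftails} pins down $\heads$ as the set of left-right parents, \partinref{unique}{define} together with the bijectivity of $\matches$ pins down $\matches$ and hence $\tails$, and the partition property then forces $\criticals = K \setminus (\heads \cup \tails)$; the paper phrases this as a comparison of two candidate GVFs rather than showing each component is an explicit function of $K$ and $f_0$, but the content is identical. If anything, your version is more complete, since you explicitly supply existence via \lemref{partition}, \lemref{assignment}, \lemref{wefindmatch}, and \lemref{wefindtails}, and you spell out the right-inverse-of-a-bijection step that the paper's proof passes over tersely.
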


\begin{proof}
    Let $K$ and $f_0$ be as defined in the theorem statement.
    Let $\maxh \colon K \to \R$ be defined for each simplex~$\sigma \in K$
    by~$\maxh(\sigma) := \max_{v \in \sigma} f_0(v)$.
    Let~$\gvf$ and
    $\othergvf$ be two GVFs that satisfy~\partinref{unique}{gvf},
    \partinref{unique}{ifftails}, and~\partinref{unique}{define}.

    Let $\sigma \in \heads$. By the forward direction of
    \partinref{unique}{ifftails}, we know that $\sigma$ is a left-right
    parent.  By the backward direction of
    \partinref{unique}{ifftails}, we know that $\sigma \in H'$.
    Thus, we have shown that $\heads \subseteq \heads'$. Repeating this argument
    by swapping the roles of $\heads$ and $\heads'$ gives us $\heads' \subseteq \heads$.

    Since $\sigma \in H = H'$ and because \partinref{unique}{define} holds, 
    we have shown that~$\sigma$
    is paired with $\rchild{\sigma}$ in both matchings, and specifically 
    ~$\matches(\rchild{\sigma})=\sigma = \matches'(\rchild{\sigma})$.
    Since $\matches$ and $\matches'$ are
    bijections by~\partinref{unique}{gvf}, we also know
    that:
    $$\tails = \{ \tau \in K ~|~ \exists \sigma \in \heads \text{ s.t.\ }
    \matches(\rchild{\sigma})=\sigma \}=\tails'.$$
    Thus, $\tails=\tails'$ and~$\matches=\matches'$.

    Finally, we conclude:
    $$\criticals = K \setminus (\tails \cup \heads)
        = K \setminus (\tails' \cup \heads')
        = \criticals',$$
    which means that $\gvf$ and $\othergvf$ are the same GVF.
    Thus, we conclude that the gradient vector field satisfying
    \partinref{unique}{gvf}, \partinref{unique}{ifftails}, and
    \partinref{unique}{define} is unique.
\end{proof}

Since \erchild\ and \eraw\ both satisfy the hypothesis of \thmref{unique},
the outputs of the algorithms must be the same.

\begin{theorem}[Algorithm Equivalence]\label{thm:tada}
    Let~$K$ be a simplicial complex and let $f_0 \colon K_0 \to \R$ be an
    injective function. Then
    \eraw($K$, $f_0$) and \erchild($K$, $f_0$) yield identical outputs.
\end{theorem}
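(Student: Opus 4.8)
The plan is to invoke the uniqueness theorem \thmref{unique}: since it asserts that there is exactly one GVF satisfying three stated properties, it suffices to show that the outputs of \emph{both} algorithms satisfy all three of those properties. Each output then equals the unique such GVF, so the two outputs equal each other.

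First I would assemble the already-established facts about \erchild. Let $\gvf$ be the output of \callerchild. By \lemref{assignment}, the output is a GVF consistent with $f_0$, which is exactly \partinref{unique}{gvf}. Combining \lemref{wefindmatch} (every head is a left-right parent) with \lemref{wefindtails} (every left-right parent is a head) yields the biconditional \partinref{unique}{ifftails}. Finally, \lemref{wefindmatch} also gives $\matches(\rchild{\sigma}) = \sigma$ for every $\sigma \in \heads$, which is \partinref{unique}{define}. Thus the output of \erchild\ meets all three hypotheses of \thmref{unique}.

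Second, I would observe that the output of \eraw\ satisfies the same three properties \emph{directly} from \thmref{eraw}: parts \partinref{eraw}{gvf}, \partinref{eraw}{ifftails}, and \partinref{eraw}{composition} are, respectively, consistency with $f_0$, the head/left-right-parent biconditional, and the matching condition $\matches(\rchild{\sigma}) = \sigma$ on heads. These coincide verbatim with the hypotheses of \thmref{unique}.

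Since both GVFs satisfy the hypotheses of \thmref{unique}, and that theorem guarantees a unique such GVF, the two outputs must coincide, completing the proof. The main obstacle here is essentially nonexistent: the substantive work was carried out in the supporting lemmas (\lemref{partition}, \lemref{wefindmatch}, \lemref{wefindtails}, \lemref{assignment}) and in the uniqueness theorem, so this statement is a short bookkeeping corollary. The only point requiring care is confirming that the three properties in \thmref{unique} align exactly with those produced by the lemmas for \erchild\ and by \thmref{eraw} for \eraw---which they do, since \thmref{unique} was stated precisely to capture them.
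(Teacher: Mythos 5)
Your proposal is correct and matches the paper's own proof essentially verbatim: both arguments verify that the output of \erchild\ satisfies the hypotheses of \thmref{unique} via \lemref{assignment}, \lemref{wefindmatch}, and \lemref{wefindtails}, that the output of \eraw\ does so via \thmref{eraw} (the paper additionally cites \lemref{kingfindmatch}, which underlies that theorem), and then conclude by uniqueness. No gaps; this is the intended bookkeeping corollary.
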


\begin{proof}
    By \thmref{eraw} and \lemref{kingfindmatch}, the output of \eraw\ satisfies the properties in \thmref{unique}.
    By \lemref{assignment}, \lemref{wefindmatch},  and \lemref{wefindtails},
    the output of \erchild\ satisfies the properties of \thmref{unique}.
       Then, by \thmref{unique}, \eraw\ and \erchild\  are equivalent.

\end{proof}

When we consider the runtime and space usage of \erchild, we find
the following:

\begin{theorem}[New Runtime]\label{thm:our-runtime}
    Given a simplicial complex $K$ (represented as a Hasse diagram),
    and an injective function $f_0: K_0 \rightarrow \mathbb{R}$,
    $\erchild$ computes a GVF consistent with $f_0$ in \ercrun{} time
    and uses \ercspace{} space.
\end{theorem}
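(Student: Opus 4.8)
The plan is to treat correctness as already established and spend the work on the resource bounds, following the three phases of \algref{erchild}. For correctness, I would simply invoke \lemref{assignment}, which shows directly that the tuple $\gvf$ returned by \callerchild\ is a gradient vector field consistent with $f_0$ (equivalently, \thmref{tada} identifies the output with that of \eraw, which is consistent by \thmref{eraw}). So the real content of the theorem is the time and space accounting, which I would obtain by charging each of the algorithm's three steps separately.

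For the runtime, I would first apply \lemref{decoration} to the decoration in \lineref{decorate}{erchild}, which costs $O(dn)$ time. Next I would analyze the nested loop of \shortlineref{downd}{erchild}--\shortlineref{ltor}{erchild}: since the inner loop ranges over $\dhasseKI{i}$ for each $i$ from $d$ down to $1$, every node of the Hasse diagram is visited exactly once, giving $O(n)$ iterations. The key claim is that each iteration is $O(1)$. Testing whether $\sigma$ is assigned is a single flag lookup; testing whether $\sigma$ is a left-right parent in \lineref{leftparent}{erchild} reads only decorated data, namely the stored reference $\rchild{\sigma}$, then the stored reference $\lparent{\rchild{\sigma}}$, and compares the latter to $\sigma$ by node identity, all in constant time; and the bookkeeping in lines~\ref{line:erchild-addh}--\ref{line:erchild-cass} (inserting into $\tails$, $\heads$, $\matches$, and setting assignment flags) is likewise $O(1)$. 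Hence this phase is $O(n)$. Finally, the cleanup in \lineref{leftovers}{erchild} scans $\dhasseKI{0}$ once, costing $O(n_0) = O(n)$. Summing the three phases gives $O(dn) + O(n) + O(n) = O(dn)$, which is the claimed time bound \ercrun, dominated by the decoration step.

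For the space bound, I would argue that beyond the storage of $\dhasseK$ itself, the decoration contributes only $O(n)$ additional space by \lemref{decoration}, and the output structures $\tails$, $\heads$, $\criticals$, and $\matches$ together store each simplex (or matched pair) at most once, hence $O(n)$. No auxiliary structure grows with $d$, so the total is \ercspace.

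The step I expect to be the main obstacle is justifying the constant per-node cost in the main loop, and in particular ruling out a hidden factor of $d$ in the left-right parent test. This rests on two modeling commitments that I would state explicitly: that $\rchild{\sigma}$ and $\lparentName$ are precomputed and stored as node references during decoration (so no recomputation happens inside the loop), and that equality of two simplices is decided by node identity in the Hasse diagram rather than by comparing vertex sets. Once these are granted, the per-iteration work is genuinely $O(1)$, the decoration phase dominates, and the overall bound \ercrun\ follows.
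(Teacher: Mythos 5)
Your proposal is correct and follows essentially the same route as the paper's own proof: charge the decoration step to \lemref{decoration} for $O(dn)$ time and $O(n)$ space, argue the main loop is $O(1)$ per node because all needed data (the stored $\rchildName$ and $\lparentName$ references) was precomputed, handle the zero-simplex cleanup in $O(n_0)$, and conclude that decoration dominates. Your version is slightly more explicit than the paper's on two points it leaves implicit --- citing \lemref{assignment} for the consistency claim and spelling out why the left-right parent test is constant time --- but these are elaborations of the same argument, not a different one.
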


\begin{proof}
First, line \shortlineref{decorate}{erchild} decorates the Hasse diagram.
By \lemref{decoration}, the decoration takes $O(dn)$ time and $O(n)$ space.
\shortlinerangeref{downd}{downdend}{erchild}, process each node
of the decorated Hasse diagram.
Each iteration of the loop is $O(1)$ in time and space because
all required data was computed while decorating.
As there are $n-n_0$ nodes to process,
\shortlinerangeref{downd}{downdend}{erchild}
takes $O(n)$ time and uses $O(1)$ space.
Finally, we iterate over the zero-simplices in $O(n_0)$ time.

The bottleneck of space and time usage of the algorithm is decorating the
Hasse diagram, therefore,
the algorithm takes $O(dn)$ time and $O(n)$ space.
\end{proof}

\section{Discussion}\label{discussion}

In this paper, we identified properties of the \extract{} and
\eraw{} algorithms~\cite{king}.
We used these properties to simplify \eraw{} to the 
equivalent algorithm \erchild{}.
Our simplification improves the 
runtime from \erawrun{} to \ercrun{}.

There are several possible extensions of this work. The
problem of finding tight bounds on the runtime of \extract{} is interesting
and open. We plan
to implement our approach on high dimensional data sets, and to
further improve to the runtime.
We intend to explore a cancellation algorithm that performs the same task as
\ecancel, eliminating critical pairs with small persistence.
Our conjectured cancellation algorithm iterates over critical simplices and
applies \erchild.

Constructing Morse functions that do not require preassigned
function values on the vertices is a related area of active research.
The problem of finding a Morse function with a minimum number of
critical simplices is NP-hard \cite{joswig04}. 
In \cite{bauer-rathod-18}, Bauer and Rathod show that 
for a simplicial complex of dimension $d \geq 3$ with $n$ simplices, 
it is NP-hard to approximate a Morse matching with
a minimum number of critical simplices within a factor of 
$O(n^{1-\epsilon})$, for any $\epsilon> 0$.
The question is open for 2-dimensional simplicial
complexes.

\paragraph{Acknowledgements}
This material is based upon work supported by the National Science Foundation
under the following grants:
CCF 1618605 \& DMS 1854336 (BTF) and
DBI 1661530 (DLM).
Additionally, BH thanks the Montana State Undergraduate Scholars Program.
All authors thank Nick Scoville for introducing us to KKM~\cite{king}
and for his thoughtful discussions.




{
\small
\bibliographystyle{abbrv}
\bibliography{references}

\begin{thebibliography}{10}

\bibitem{uli-phd}
U.~Bauer.
\newblock {\em Persistence in Discrete Morse Theory}.
\newblock PhD thesis, Nieders{\"a}chsische Staats-und
  Universit{\"a}tsbibliothek G{\"o}ttingen, 2011.

\bibitem{bauer2012optimal}
U.~Bauer, C.~Lange, and M.~Wardetzky.
\newblock Optimal topological simplification of discrete functions on surfaces.
\newblock {\em Discrete and Computational Geometry}, 47(2):347--377, 2012.

\bibitem{bauer-rathod-18}
U.~Bauer and A.~Rathod.
\newblock Hardness of approximation for {M}orse matching.
\newblock arXiv:1801.08380, 2018.

\bibitem{vcomic2011dimension}
L.~{\v{C}}omi{\'c} and L.~De~Floriani.
\newblock Dimension-independent simplification and refinement of {M}orse
  complexes.
\newblock {\em Graphical Models}, 73(5):261--285, 2011.

\bibitem{wang18}
T.~Dey, J.~Wang, and Y.~Wang.
\newblock Graph reconstruction by discrete {M}orse theory.
\newblock In {\em 34th Symposium on Computational Geometry (SoCG)}, pages
  31:1--31--13, 2018.

\bibitem{edelsbrunner2010computational}
H.~Edelsbrunner and J.~Harer.
\newblock {\em Computational Topology: {A}n Introduction}.
\newblock American Mathematical Society, 2010.

\bibitem{edelsbrunner2003hierarchical}
H.~Edelsbrunner, J.~Harer, and A.~Zomorodian.
\newblock Hierarchical {M}orse-{S}male complexes for piecewise linear
  $2$-manifolds.
\newblock {\em Discrete and Computational Geometry}, 30(1):87--107, 2003.

\bibitem{edelsbrunner02}
H.~Edelsbrunner, D.~Letscher, and A.~Zomorodian.
\newblock Topological persistence and simplification.
\newblock {\em Discrete and Computational Geometry}, 28:511–533, 2002.

\bibitem{forman98}
R.~Forman.
\newblock Discrete {M}orse theory for cell complexes.
\newblock {\em Advances in Mathematics}, 134:90--145, 1998.

\bibitem{forman02}
R.~Forman.
\newblock A user's guide to discrete {M}orse theory.
\newblock {\em S\'eminaire Lotharingien de Combinatoire}, 42:Art. B48c, 35pp,
  2002.

\bibitem{hersh05}
P.~Hersh.
\newblock On optimizing discrete {M}orse functions.
\newblock {\em Advances in Applied Math}, 35:294--322, 2005.

\bibitem{joswig04}
M.~Joswig and M.~Pfetsch.
\newblock Computing optimal {M}orse matchings.
\newblock {\em SIAM Journal on Discrete Mathematics (SIDMA)}, 20(1):11--25,
  2006.

\bibitem{king}
H.~King, K.~Knudson, and N.~Mramor.
\newblock Generating discrete {M}orse functions from point data.
\newblock {\em Experimental Mathematics}, 14:435--444, 2005.
\newblock MR2193806.

\bibitem{knudson2015morse}
K.~Knudson.
\newblock {\em Morse Theory: Smooth and Discrete}.
\newblock World Scientific Publishing Company, 2015.

\bibitem{lewiner03}
T.~Lewiner, H.~Lopes, and G.~Tavares.
\newblock Toward optimality in discrete {M}orse theory.
\newblock {\em Experimental Mathematics}, 12:271--285, 2003.

\bibitem{milnor63}
J.~Milnor.
\newblock {\em Morse Theory}.
\newblock Princeton University Press, Princeton, New Jersey, 1963.

\bibitem{nanda2013morse}
K.~Mischaikow and V.~Nanda.
\newblock Morse theory for filtrations and efficient computation of persistent
  homology.
\newblock {\em Discrete and Computational Geometry}, (50):330, 2013.

\bibitem{raz2002complexity}
R.~Raz.
\newblock On the complexity of matrix product.
\newblock {\em SIAM Journal on Computing}, 32:1356–1369, 2003.

\bibitem{nic}
N.~Scoville.
\newblock {\em Discrete Morse Theory}.
\newblock American Mathematical Society, Providence, Rhode Island, 2019.

\end{thebibliography}
}

\appendix
\section{Additional Details for \extract}
To put our result in context, we now provide a glimpse into the inner workings of \extract, and reveal the underlying properties of \eraw\ which give it an identical output to \erchild. We also provide a formal runtime analysis of \eraw\ to verify that \erchild\ provides an improved time complexity.

\subsection{Subroutines for \extract}\label{append:kingsubroutines}
In this section, we recall the algorithms proposed by KKM~\cite{king}.
Note that we made some slight modifications to the presentation of KKM's
initial description to improve readability.
The modifications do not affect the asymptotic time or space
used by the algorithm,
although it does remove some redundant computation.

In particular, we modified the inputs to explicitly pass around
a GVF so that the inputs of each algorithm are clear.
We simplified notation and inlined the subroutine $\cancel$.
From the previous modifications,
we observed that the algorithm recomputes a gradient path
that is currently in scope
and so we simply unpack the path on \lineref{unpackGamma}{ecancel}.

\eraw\ computes the lower link of each vertex $v$ in a \simp, and assigns $v \in \criticals$
if $\llinkK{v} = \emptyset$. If $\llinkK{v} \not = \emptyset$, its lower link is
recursively inputted into \eraw\ and this recursion continues until an empty lower link is reached. 
When the lower link is not empty, \eraw\ assigns $v \in T$ and the smallest function valued vertex 
$\omega_0$ in $\llinkK{v}$ is combined with $v$ and added to $\heads$, carrying with this assignment 
a mapping $m$ from $\omega_0*v \in \heads$
to $v \in \tails$. As the recursion continues, higher dimensional simplices in the lower start of $v$
are able to be assigned to both $\heads$
and $\tails$ based on combinations consistent with the assignments of the vertices and the original 
mappings of $\matches$.
Higher dimensional critical cells are assigned similarly by combining the current vertex and each 
previously computed $\sigma \in \criticals$ from the last recursion, until all simplices have been assigned.

Then, because \eraw\ may have extraneous critical cells,
\cancel\ works to reduce the number of critical cells by locating ``redundant" gradient paths to a
critical simplex and reversing them after the first pass by \eraw, refining the output of \extract\ .

\begin{algorithm}
    \caption{\cite{king} \eraw}\label{alg:eraw}
    \begin{algorithmic}[1]
        \Require simp.\ complx.\ $K$, injective fcn.\ $f_0: K_0\rightarrow \mathbb{R}$
        \Ensure a GVF consistent with $f_0$
        \State $\tails \gets \emptyset$, $\heads \gets \emptyset$, $\criticals
            \gets \emptyset$, $\matches \gets \emptyset$
        \ForAll{$v\in K$}\label{line:eraw-selectv}
            \State Let $K':=$ the lower link of $v$. \label{line:eraw-llcall}
            \If{$K'=\emptyset$}
                \State Add $v$ to $\criticals$\label{line:eraw-addvc}
            \Else
            \State Add $v$ to $\tails$\label{line:eraw-addvt}
            \State $(\tails',\heads',\criticals',\matches')\leftarrow
                $\extract$(K',f_0,\infty)$\label{line:eraw-callexact}
            \State $w_0 \gets \argmin_{w \in \criticals'_0} \{ f_0(w) \}$\label{line:eraw-omnot}
            \State Add $w_0v$ to $\heads$\label{line:eraw-addh1}
            \State Define $\matches(w_0*v):=v$
            \State For each $\sigma\in \criticals' \setminus \{w_0\}$, add
                $v*\sigma$ to $\criticals$\label{line:eraw-highc}
                \ForAll{$\sigma\in \tails'$}
                    \State Add $v*\sigma$ to $\tails$\label{line:eraw-addt}
                    \State Add $v*\matches'(\sigma)$ to $\heads$\label{line:eraw-addh2}
                    \State Define $\matches(v*\sigma)=v*\matches'(\sigma)$
                \EndFor
            \EndIf
        \EndFor\\
        \Return $\gvf$
    \end{algorithmic}
\end{algorithm}

\begin{algorithm}
    \caption{\cite{king} \ecancel}\label{alg:ecancel}
    \begin{algorithmic}[1]
        \Require \simp\ $K$,
            injective function $f_0\colon K_0 \to \R$,
            $p\geq 0$, $j \in \N$, and GVF $\gamma$
        \Ensure Gradient vector field on $K$

        \State Let $\gvf$ be the four components of~$\gamma$
        \ForAll{$\sigma \in \criticals_j$}
            \State $s \gets \max_{v \in \sigma} f_0(v)$
            \State $S \gets \{ \Gamma ~|~ \Gamma \in \gradpaths ,
                s - \max_{w \in \Gamma_L} f_0(w) < p  \}$
            \ForAll{$\Gamma \in S$}
                    \State $m_{\Gamma} \gets \infty$
                \If{$\Gamma_L \neq \Gamma'_L$ for any other $\Gamma' \in S$}
                    \State $m_{\Gamma} \gets \max_{w \in \Gamma_L}{f_0(w)}$
                \EndIf
            \EndFor
            \State $\Gamma^* \gets \argmin_{ \Gamma \in S}
                \{ m_{\Gamma}\}$
            \If{$m_{\Gamma^*} \neq \infty$}

                \State $\{\sigma_1, \tau_1, \cdots, \sigma_k, \tau_k\} \gets \Gamma^*$
                \label{line:ecancel-unpackGamma}

                \State Remove $\tau_k, \sigma_1$  from $\criticals$
                \State Add $\tau_k$ to $\tails$; Add $\sigma_1$ to $\heads$
                \State Add $(\tau_k, \sigma_{k})$ to $\matches$
                \For{$i=1,...,k-1$}
                    \State Remove $(\tau_i, \sigma_{i+1})$ from $\matches$
                    \State Add $(\tau_i,\sigma_i)$ to $\matches$
                \EndFor
            \EndIf
        \EndFor\\
        \Return $\gvf$
    \end{algorithmic}
\end{algorithm}

Let $p \in \N$, $\sigma \in K_p$ be a critical simplex.
Let $\gradpaths$ denote the set of all nontrival gradient
path starting at $\sigma \in \criticals_j$ and ending in $\criticals_{j-1}$.

\subsection{Analysis of \eraw}\label{append:king}

In this appendix, we provide the analysis \algref{extract} from \secref{king}.
In what follows, let $K$ be a simplicial complex and let $f_0 \colon K_0 \to \R$
be an injective function.

\begin{lemma}[Raw Heads are Parents]\label{lem:kingfindmatch}
    Let $\gvf$ be the output of $\eraw(K,f_0)$.
    Every simplex in $\heads$ is a left-right parent.
    Furthermore, for all $\sigma \in \heads$,
    $\matches(\rchild{\sigma})=\sigma$.
\end{lemma}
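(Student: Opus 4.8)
The plan is to prove both assertions together by induction on $d=\dim{K}$, following the recursion of \eraw; the base case (all lower links empty, so $\heads=\emptyset$) is vacuous. The recurring device is a \emph{join-compatibility} principle: \eraw only ever inserts simplices of the form $v*\xi$, where $v$ is the vertex currently being processed and $\xi$ lies in the lower link $\llinkK{v}$, so every vertex of $\xi$ has $f_0$-value below $f_0(v)$ and hence $v$ is the lexicographically largest vertex of $v*\xi$. Reading \algref{eraw}, while processing a vertex $v$ with nonempty lower link $K':=\llinkK{v}$ and recursive output $(\tails',\heads',\criticals',\matches')$, a head arises in exactly one of two ways: (a) as the edge $w_0*v$, where $w_0=\argmin_{w\in\criticals'_0}f_0(w)$ and the algorithm pairs the tail $v$ with it; or (b) as $v*\matches'(\sigma)$ for a tail $\sigma\in\tails'$, recorded with $\matches(v*\sigma)=v*\matches'(\sigma)$. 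I treat the two cases separately and exploit join-compatibility in each.

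For case (a), the faces of $w_0*v$ are $[w_0]$ and $[v]$; since $f_0(w_0)<f_0(v)$ we get $\rchild{w_0*v}=[v]$, so $\matches(\rchild{w_0*v})=\matches([v])=w_0*v$, as required. For the left-right-parent property I must check $\lparent{[v]}=w_0*v$, i.e.\ that $w_0$ is the $f_0$-smallest neighbor of $v$: an edge on $v$ using a below-$v$ neighbor is lexicographically smaller than one using an above-$v$ neighbor (its larger coordinate is $f_0(v)$ rather than something exceeding it), and among below-$v$ edges the least is the edge to the smallest such neighbor. It remains to identify $w_0$ with that neighbor. The vertices of $K'$ are exactly the below-$v$ neighbors of $v$, and the $f_0$-least of them is a local minimum, hence a critical $0$-simplex of the recursive output (an absolute minimum has infinite persistence and is never cancelled), so it lies in $\criticals'_0$ and, being smallest, equals $w_0$.

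For case (b), set $h':=\matches'(\sigma)$. By the inductive hypothesis on $K'$ (of dimension $<d$), $h'$ is a left-right parent of $K'$ with $\matches'(\rchild{h'})=h'$; since $\matches'$ is a bijection this forces $\sigma=\rchild{h'}$ and $\lparentName_{K'}(\sigma)=h'$. Write $H:=v*h'$. Because $v$ is the top vertex of $H$, deleting the $f_0$-least vertex of $H$ deletes the $f_0$-least vertex of $h'$, so $\rchild{H}=v*\rchild{h'}=v*\sigma$; this already gives $\matches(\rchild{H})=\matches(v*\sigma)=v*h'=H$. For the parent property I must show $\lparent{v*\sigma}=v*h'$. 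The key observation is that for a vertex $u$ with $f_0(u)<f_0(v)$ one has $u*(v*\sigma)\in K$ if and only if $u*\sigma\in K'$, whereas any admissible $u$ with $f_0(u)>f_0(v)$ yields a lexicographically larger coface; since $\sigma$ admits the $K'$-coface $h'$, the $f_0$-least admissible vertex lies below $v$ and is exactly the one adjoined by $\lparentName_{K'}(\sigma)$. Hence $\lparent{v*\sigma}=v*\lparentName_{K'}(\sigma)=v*h'=H$, so $H$ is a left-right parent.

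I expect the crux to be the join-compatibility statements: one must verify carefully, directly from the lexicographic definition, that adjoining the maximal vertex $v$ commutes both with $\rchild{\cdot}$ (delete the least vertex) and with $\lparentName$ (adjoin the $f_0$-least admissible vertex), and that the correspondence $u\mapsto u$ between admissible extensions of $v*\sigma$ in $K$ and of $\sigma$ in $K'$ respects the $f_0$-order. A secondary subtlety, used only in case (a), is that the recursive call leaves the global minimum of each lower link critical; I would isolate this as a short preliminary fact (a global minimum has no nontrivial gradient path terminating at it that could be reversed, so it survives as a critical $0$-simplex). With these two ingredients the induction closes, establishing both that every $\sigma\in\heads$ is a left-right parent and that $\matches(\rchild{\sigma})=\sigma$.
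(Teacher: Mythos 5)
Your proposal follows essentially the same route as the paper's proof: induction on $\dim{K}$ tracking the recursion of \eraw, a case split according to the two lines at which heads are created (\lineref{addh1}{eraw} gives your case (a), \lineref{addh2}{eraw} your case (b)), and the observation that adjoining the top vertex $v$ commutes with $\rchildName$ and $\lparentName$. Your ``join-compatibility'' principle is exactly what the paper invokes, in contrapositive form, when it argues that if $\sigma$ were not a left-right parent one could ``remove $v$ from $\sigma$ and $\tau$'' and contradict the inductive hypothesis. If anything, you are more complete than the paper: the paper checks the edge heads only in the base case $\dim{K}=1$ and, in the inductive step, only treats heads of top dimension, whereas your two cases cover every head created while processing a vertex; you also make explicit the coface correspondence ($u*\sigma\in K'$ iff $u*(v*\sigma)\in K$ when $f_0(u)<f_0(v)$) that underlies $\lparent{v*\sigma}=v*\lparentName_{K'}(\sigma)$.

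Two caveats. First, your parenthetical justification of the preliminary fact in case (a) --- that the $f_0$-minimum vertex of $\llinkK{v}$ lies in $\criticals'_0$ --- is wrong as stated: a global minimum can perfectly well be the terminus of a nontrivial gradient path. Take $K'$ to be the path $a$--$b$--$c$ with $f_0(a)<f_0(c)<f_0(b)$; after \eraw, the critical edge $bc$ has a gradient path through $b$ and $ab$ ending at the minimum $a$. What protects the minimum is not the absence of such paths but consistency: matching the $f_0$-minimum vertex $u$ of $K'$ with an edge $e$ would force $f(u)\geq f(e)$, while $f(e)$ must lie within $\eps$ of $\maxh(e)>f_0(u)$; hence any GVF consistent with $f_0$ --- which \extract\ is specified to return --- must leave $u$ critical. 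That is the short preliminary fact you should isolate. Second, a gap your write-up shares with the paper: the recursive call in \lineref{callexact}{eraw} is to \extract, i.e.\ \eraw\ followed by \ecancel, so the inductive hypothesis (a statement about the output of \eraw) is being applied to $(\tails',\heads',\criticals',\matches')$, a tuple that may contain matches produced by cancellation. Neither your argument nor the paper's verifies that such matches still satisfy the left-right-parent property, and that is where the remaining work lies if one unwinds the formalization of \ecancel.
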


\begin{proof}

    Let $\sigma \in \heads$.  We show that $\sigma$ is a left-right
    parent by induction on the dimension of $K$.
    When $\dim{K}=1,$
    $\sigma$ is an edge, and $\sigma=\matches(\tau)$ for some vertex $\tau\in T$.
    In \lineref{addh1}{eraw} $\sigma$
    is defined as
    $\matches(\tau)=[w_0, \tau]$ where $w_0\in C'_0$ so that $f_0(w_0)$ is smallest.
    So, $\sigma$ is a left-right parent. Furthermore,
    $\matches(\rchild{\sigma})=\matches(\rchild{[w_0,\tau]})=\matches(\tau)=\sigma$.

    Suppose every $\sigma \in H$ is a left-right parent when $\dim{K}\leq d$ and consider $\dim{K}=d+1$.
    If $\sigma$ is a $(d+1)-$simplex, $\sigma=\matches(\tau)$ is defined in
     \lineref{addh2}{eraw},
    when a vertex $v$ is selected in \lineref{selectv}{eraw}. We extend
    the GVF on the $\llinkK{v}$ to include the lower star of $v$.
    We have $\matches(\tau)=v*\matches'(\alpha)$ where $\alpha=[v_1,\ldots, v_d]\in T',$
    $\matches'(\alpha)=[v_0,v_1\ldots v_d]\in H'$.
    Since $\alpha$ and $\matches'(\alpha)$ are in the $\llinkK{v}$ we have
    $f(v_i)<f(v)$ for $0\leq i \leq d$. Then $\tau=v*\alpha=[v_1,\ldots,v_d,v]$
    and $\sigma=\matches(\tau)=[v_0,v_1,\ldots,v_d,v]$.

    By the induction hypothesis $\matches'(\alpha)\in H'$ is a left-right parent.
    If $\sigma$ is not a left-right parent,
    we can remove $v$ from $\sigma$ and $\tau$
    and contradict that $\matches'(\alpha)\in H'$.

    Furthermore, $\matches(\rchild{\sigma})=\matches(\rchild{\matches(\tau)})=
    \matches(\rchild{[v_0,v_1,\ldots,v_d,v]})=\matches([v_1,\ldots,v_d,v])=\matches(\tau)=\sigma$.
    This proves the claim.
\end{proof}

\begin{lemma}[Raw Parents are Heads]\label{lem:kingfindtails}
    Let $\gvf$ be the output of $\eraw(K,f_0)$.
    Let $\sigma \in K$.  If $\sigma$ is a left-right parent, then $\sigma \in
    \heads$.
\end{lemma}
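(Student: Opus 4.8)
The plan is to prove the converse of \lemref{kingfindmatch}: every left-right parent produced by \eraw{} ends up in $\heads$. I would argue the contrapositive, showing that if $\sigma \notin \heads$, then $\sigma$ is not a left-right parent. By \lemref{partition} applied to the output of \eraw{} (or directly from the structure of \algref{eraw}), every simplex lands in exactly one of $\heads$, $\tails$, or $\criticals$, so I would split into the two cases $\sigma \in \tails$ and $\sigma \in \criticals$ and show in each that $\lparent{\rchild{\sigma}} \neq \sigma$. This mirrors the two-case structure already used in the proof of \lemref{wefindtails} for the \erchild{} side, and the induction on $\dim{K}$ set up in \lemref{kingfindmatch} gives the right scaffolding, since \eraw{} builds the GVF recursively on lower links.

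First I would handle $\sigma \in \tails$. By the construction in \lineref{addt}{eraw} (mirroring \lemref{kingfindmatch}), $\sigma = v * \alpha$ is paired as the tail of some head $h = v * \matches'(\alpha)$, where $\matches'(\alpha)$ contains an extra lowest-valued vertex from the lower link. Writing $\sigma = [v_0, v_1, \ldots, v_d]$ and $h = [v_{-1}, v_0, \ldots, v_d]$ with $f_0(v_{-1}) < f_0(v_0)$, I would exhibit the face $\xi = [v_{-1}, v_1, \ldots, v_d]$ of $h$ and observe $\xi \lexlt \sigma$. Then $\rchild{\sigma} = [v_1, \ldots, v_d]$ is also a face of $\xi$, so its leftmost coface satisfies $\lparent{\rchild{\sigma}} \leq_{\mathrm{lex}} \xi \lexlt \sigma$, whence $\sigma$ is not a left-right parent. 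This is essentially the second case of \lemref{wefindtails} transported to the \eraw{} setting.

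Next I would handle $\sigma \in \criticals$. Here I would trace through how \eraw{} assigns critical cells: a critical simplex of positive dimension is created in \lineref{highc}{eraw} as $v * \rho$ for some $\rho \in \criticals' \setminus \{w_0\}$ coming from the recursion on the lower link, and the zero-dimensional criticals arise in \lineref{addvc}{eraw}. Using the induction hypothesis on the lower link $K'$ (where the statement of this very lemma holds for $\dim{K'} < \dim{K}$), I would argue that $\rho$ fails to be a left-right parent inside $K'$, and that joining with $v$ cannot repair this: the rightmost child of $v * \rho$ is controlled by $\rho$'s rightmost child together with $v$, and the leftmost coface computation produces a lexicographically smaller simplex than $\sigma$, exactly as in the $\tails$ case. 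I would also note that a vertex placed in $\criticals$ has no proper faces and so is vacuously not a left-right parent.

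The main obstacle I expect is the $\criticals$ case: unlike the clean pairing structure available for tails, the critical cells are defined only implicitly through the recursive join with previously computed criticals of the lower link, so I must carefully track how $\rchild{}$ and $\lparentName$ interact with the join operation $v * \rho$ and confirm that the failure of the left-right-parent condition is inherited from the lower link through the induction. Making the lexicographic bookkeeping precise — in particular verifying that $\lparent{\rchild{\sigma}}$ is strictly lexicographically below $\sigma$ after joining with $v$ — is the delicate step; everything else follows the template already established in \lemref{wefindtails} and \lemref{kingfindmatch}.
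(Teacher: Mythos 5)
Your proposal is correct in substance, but it is not the paper's argument: you have swapped which of the two cases carries the induction. The paper inducts on $\dim K$ in the $\tails$ case (the tail $v*\alpha$ inherits from $\alpha\in\tails'$ the failure to be a left-right parent) and handles $\criticals$ directly, asserting for $\sigma = v*\alpha$ with $\alpha=[v_0,\ldots,v_d]\in\criticals'\setminus\{w_0\}$ that $\lparent{\rchild{\sigma}}=[w_0,v_1,\ldots,v_d,v]\neq\sigma$. You do the reverse: your $\tails$ case is non-inductive --- via \lemref{kingfindmatch} and bijectivity of $\matches$, the head matched to $\sigma$ is $h=v_{-1}*\sigma$ with $f_0(v_{-1})$ below every vertex of $\sigma$, and the facet $\xi=[v_{-1},v_1,\ldots,v_d]$ of $h$ is a coface of $\rchild{\sigma}$ with $\xi\lexlt\sigma$ --- while your $\criticals$ case is inductive. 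Both halves of the swap are sound; the ``delicate'' inheritance step does go through, because if the leftmost coface of $\rchild{\rho}$ computed inside the lower link is some $\xi'\lexlt\rho$, then $v*\xi'$ is a coface of $\rchild{\sigma}=v*\rchild{\rho}$ in $K$ and $v*\xi'\lexlt v*\rho=\sigma$, since joining $v$ (the largest vertex on both sides) preserves the lexicographic comparison. Notably, your route for $\criticals$ is more robust than the paper's: the paper's equality $\lparent{\rchild{\sigma}}=[w_0,v_1,\ldots,v_d,v]$ tacitly assumes that $w_0*[v_1,\ldots,v_d,v]$ is a simplex of $K$, which can fail when the lower link of $v$ is disconnected (e.g., a lone vertex $w_0$ together with a disjoint hollow triangle); your argument invokes only cofaces that are guaranteed to exist.

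Two patches are still needed. First, your generic arguments presuppose that $\rchild{\sigma}$ (resp.\ $\rchild{\rho}$) exists, so the degenerate cases must be split off as the paper does: tails added in \lineref{addvt}{eraw} are vertices, hence never left-right parents, and for critical edges $[w,v]$ with $w\in\criticals'_0\setminus\{w_0\}$ you need the direct observation $\lparent{[v]}\leq_{lex}[w_0,v]\lexlt[w,v]$; this is also the base case your induction on $\criticals$ bottoms out in, so it cannot be absorbed into the inheritance argument. Second, the recursion in \lineref{callexact}{eraw} calls $\extract(K',f_0,\infty)$, not \eraw, so your induction hypothesis is formally about the post-\ecancel{} output on the lower link rather than about the statement of this lemma verbatim; the paper's own inductions (in its $\tails$ case and in \lemref{kingfindmatch}) have exactly the same issue, so this is a shared caveat rather than a defect of your approach, but it should be acknowledged when you set up the induction.
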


\begin{proof}
We show if $\sigma \in T \cup C$ then $\sigma$ is not a left-right parent.
First, suppose $\sigma \in T$.
We use induction on $\dim{K}$ to show $\sigma$ is not a left-right parent.
For the base case, $\dim{K}=1, \sigma$ is a vertex and can not be a left-right parent.

Suppose $\sigma \in T$ is not a left-right parent when $\dim{K}\leq d$ and consider $\dim{K}=d+1$.
Then $\sigma$ is added to $T$ in \lineref{addt}{eraw} when a vertex $v$ is selected in \shortlineref{selectv}{eraw}.
As in \lemref{kingfindmatch}, write $\sigma=v*\alpha=[v_1,\ldots,v_d,v]$ for some $\alpha\in T'$.

By the induction hypothesis $\alpha$ is not a left-right parent, thus
$\ell \circ \rchild\alpha\neq \alpha$, and there exists a vertex $v_{-1}$ such that
$\ell ( \rchild \alpha)=[v_{-1},v_2,v_3\ldots,v_d]$ where $f(v_{-1})<f(v_1)$. We have
$\ell \circ \rchild \sigma=\ell \circ \rchild {[v_1,v_2,\ldots,v_d,v]}=
\ell( [v_2,v_3\ldots,v_d,v])=[v_{-1},v_2,\ldots,v_d,v]\neq \sigma$.

Now, suppose $\sigma  \in C.$
There are two places where elements are added to $C,$ \lineref{addvc}{eraw} and
\shortlineref{highc}{eraw}.
In \shortlineref{addvc}{eraw} $c$ is a vertex and can not be a left-right parent.

In \shortlineref{highc}{eraw}
$c$ is defined as $c=v*\alpha$ for some $\alpha=[v_0,v_1,\ldots, v_d]\in C'\backslash w_0$
where $w_0\in C'_0$ so that $f_0(w_0)$ is smallest.
Now $\ell \circ g (c)=\ell \circ g([v_0,v_1,\ldots, v_d,v])=
\ell([v_1,\ldots ,v_d,v])=[w_0,v_1,\ldots,v_d,v]\neq c$.
We have shown that if $\sigma$ is a left-right parent, then
$\sigma\in H$.

\end{proof}

We summarize the properties of \eraw\ in the following theorem.

\kingthm*

\begin{proof}
    \partinref{eraw}{gvf} is proven in Theorem 3.1 of ~\cite{king}.
    By \lemref{kingfindmatch} and \lemref{kingfindtails}, we conclude
    \partinref{eraw}{ifftails}. Also by \lemref{kingfindmatch} we can guarantee \partinref{eraw}{composition}.

    To show \partinref{eraw}{timebound}, we observe that the worst-case runtime
    for a single execution of \lineref{callexact}{eraw} happens when the lower
    link of $v$ is of size $\Theta(n/2)$.  Computing the optimal pairings that
    \extract\ returns is at least as hard as computing the homology of $K'$,
    which is of the time complexity of matrix multiplication.
    By~\cite{raz2002complexity}, we know that the runtime of \eraw\ is
    lower-bounded by~\erawrun.
\end{proof}

\camera{\section{Equivalence of Definitions}\label{append:defs}

We gave the following definition of discrete Morse function:

\morsedefn*

However, in \cite{nic}, proves the following:

\brad{this is problem 2.23 in \cite{nic}}
\begin{lemma}[Regular Characterization]\label{lem:regsim}
A $p-$simplex $\sigma$ is regular if and only if either of the 
following holds
\begin{enumerate}[(i)]
\item There exists $\tau^{(p+1)}\succ \sigma$ such that $f(\tau)\leq f(\sigma)$.
\item There exists $\gamma^{(p-1)}\prec \sigma$ such that $f(\gamma)\geq f(\sigma)$.
\end{enumerate}

\end{lemma}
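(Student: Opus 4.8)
The statement asserts precisely that the critical/regular dichotomy of \defref{morsedefn}, which quantifies over \emph{all} proper faces and cofaces, can already be detected using only codimension-one neighbors. The plan is to reduce both implications to the structural fact recorded in \lemref{impossible}: for a discrete Morse function $f$, whenever $\alpha \prec \beta$ with $f(\alpha) \geq f(\beta)$, the simplex $\beta$ is necessarily a \emph{codimension-one} coface of $\alpha$. Once this is available, the lemma is a short exercise in unwinding definitions.

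First I would dispatch the reverse implication, which is immediate. If (i) holds, then $\tau$ is a proper coface of $\sigma$ with $f(\tau) \leq f(\sigma)$, so the set $\{\beta \succ \sigma \mid f(\beta) \leq f(\sigma)\}$ is nonempty; if (ii) holds, then $\gamma$ witnesses that $\{\gamma \prec \sigma \mid f(\gamma) \geq f(\sigma)\}$ is nonempty. Either way, at least one of the two defining sets of a critical simplex is nonempty, so $\sigma$ fails to be critical and is therefore regular.

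For the forward implication I would argue as follows. Suppose $\sigma$ is regular, i.e.\ not critical. By \defref{morsedefn} there is a proper coface $\beta \succ \sigma$ with $f(\beta) \leq f(\sigma)$, or a proper face $\gamma \prec \sigma$ with $f(\gamma) \geq f(\sigma)$ --- but a priori of arbitrary codimension. In the first case, \lemref{impossible} applied with the pair $\sigma \prec \beta$ forces $\dim{\beta} = \dim{\sigma}+1$, so $\beta$ is exactly the desired codimension-one coface $\tau^{(p+1)}$ and (i) holds. In the second case, \lemref{impossible} applied with the pair $\gamma \prec \sigma$ forces $\dim{\sigma} = \dim{\gamma}+1$, so $\gamma$ is a codimension-one face $\gamma^{(p-1)}$ and (ii) holds. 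Thus any arbitrary-codimension witness is automatically promoted to a codimension-one one, which is the whole point.

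The only genuine content, and the step I expect to be the main obstacle, is \lemref{impossible} itself. I would prove it by contradiction: assume $\alpha \prec \beta$ with $f(\alpha) \geq f(\beta)$ yet $\dim{\beta} \geq \dim{\alpha}+2$. Condition (1) of \defref{morsedefn} at $\alpha$ says at most one proper coface of $\alpha$ has value $\leq f(\alpha)$; since $\beta$ is such a coface, every intermediate simplex $\nu$ with $\alpha \prec \nu \prec \beta$ must satisfy $f(\nu) > f(\alpha) \geq f(\beta)$. Because the codimension is at least two, there are at least two such $\nu$ --- obtained by adjoining to $\alpha$ one of the $\geq 2$ vertices of $\beta \setminus \alpha$ --- and these are two distinct proper faces of $\beta$ each with value $\geq f(\beta)$, contradicting condition (2) at $\beta$. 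The crux is this counting step, which plays the coface inequality at the bottom simplex against the face inequality at the top; once it is secured, the characterization follows by the definition-chasing above.
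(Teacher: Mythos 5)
Your proof is correct, and it is worth noting that the paper itself does not actually prove this lemma: it is stated in the appendix and deferred to the literature (it is an exercise in \cite{nic}), with the follow-up characterization left unproven as well. Your argument therefore supplies a self-contained proof where the paper has none. The structure is the right one: the reverse implication is pure definition-chasing (any witness makes $\sigma$ non-critical, hence regular), and the forward implication reduces to the impossible-pairs statement. One subtlety you handled well: as stated in the paper, \lemref{impossible} only rules out pairs $\tau \prec \sigma$ of codimension at least two with the \emph{strict} inequality $f(\tau) > f(\sigma)$, which would leave the equality case $f(\tau) = f(\sigma)$ uncovered when promoting an arbitrary-codimension witness to a codimension-one one; equality between a simplex and a proper coface is not forbidden by \defref{morsedefn}. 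Your version of the exclusion step assumes only $f(\alpha) \geq f(\beta)$ and still derives the contradiction --- condition (1) at $\alpha$ forces every intermediate simplex $\nu$ with $\alpha \prec \nu \prec \beta$ to satisfy $f(\nu) > f(\alpha) \geq f(\beta)$, and codimension $\geq 2$ supplies at least two such $\nu$, violating condition (2) at $\beta$ --- so it is strictly stronger than the paper's stated lemma and is exactly what the characterization needs. The counting is sound: the $\geq 2$ vertices of $\beta$ not in $\alpha$ each yield a distinct intermediate simplex, and these lie in $K$ because $K$ is a simplicial complex.
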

In fact both of the above properties can not both be true.
As shown in:
\begin{lemma}[Exclusion]\label{lem:exclusion}
Let $f:K\rightarrow \R$ be a discrete Morse
function and $\sigma\in K$ a regular simplex.
Then conditions $1)$ and $2)$ of \lemref{regsim}
cannot both be true, Hence, exactly one of the conditions hold
whenever $\sigma$ is regular. 
\end{lemma}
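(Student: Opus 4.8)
The plan is to argue by contradiction, reducing both conditions to the Morse inequalities of \defref{morsedefn} via a diamond argument. Suppose, toward a contradiction, that conditions (i) and (ii) of \lemref{regsim} both hold for a regular $p$-simplex $\sigma$. Then there is a coface $\tau \succ \sigma$ of dimension $p+1$ with $f(\tau) \leq f(\sigma)$ and a face $\gamma \prec \sigma$ of dimension $p-1$ with $f(\gamma) \geq f(\sigma)$; chaining these gives $f(\tau) \leq f(\sigma) \leq f(\gamma)$. The structural input I would use is the diamond property: because $\tau$ has exactly two vertices not belonging to $\gamma$, there are precisely two $p$-simplices strictly between $\gamma$ and $\tau$, namely our given $\sigma$ and a second simplex $\sigma'$, each obtained by adjoining one of these two vertices to $\gamma$. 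Both lie in $K$ since a simplicial complex is closed under taking faces.

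Next I would extract two strict inequalities about $\sigma'$ from the uniqueness guaranteed by \defref{morsedefn}. Applying the second inequality to $\tau$: the face $\sigma$ belongs to $\{\delta \prec \tau \mid f(\delta) \geq f(\tau)\}$, a set of size at most one, so the other codimension-one face $\sigma'$ cannot also satisfy $f(\sigma') \geq f(\tau)$; hence $f(\sigma') < f(\tau)$. Symmetrically, applying the first inequality to $\gamma$: the coface $\sigma$ belongs to $\{\beta \succ \gamma \mid f(\beta) \leq f(\gamma)\}$, again of size at most one, so the other coface $\sigma'$ must satisfy $f(\sigma') > f(\gamma)$. Together these give $f(\gamma) < f(\sigma') < f(\tau)$, and in particular $f(\gamma) < f(\tau)$, contradicting $f(\tau) \leq f(\gamma)$ obtained above.

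This contradiction shows (i) and (ii) cannot hold simultaneously. Since $\sigma$ is regular, \lemref{regsim} guarantees that at least one of (i), (ii) holds, so exactly one holds, as claimed. The only genuinely delicate step is the diamond property---verifying that exactly two $p$-simplices sit between $\gamma$ and $\tau$ and that both belong to $K$; this is where the simplicial (or, more generally, regular CW) structure is essential, and it is precisely what lets the two uniqueness bounds of \defref{morsedefn} be played against each other at $\tau$ and at $\gamma$. The remaining inequalities are an immediate computation.
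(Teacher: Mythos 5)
Your proof is correct, but there is nothing in the paper to compare it against line by line: the paper states the Exclusion lemma in its appendix with only the phrase ``As shown in:'' and a pointer to the textbook \cite{nic}, where it is left as an exercise; no argument is actually written out (the neighboring lemma's proof is literally a placeholder). Your blind attempt therefore fills a genuine gap rather than duplicating the paper's reasoning. The argument you give is the classical one due to Forman: chain the two hypothesized inequalities to get $f(\tau) \leq f(\sigma) \leq f(\gamma)$, invoke the diamond property to produce the second $p$-simplex $\sigma'$ with $\gamma \prec \sigma' \prec \tau$ (both faces of $\tau$, hence in $K$), and then use the two cardinality bounds of \defref{morsedefn} --- at $\tau$ to force $f(\sigma') < f(\tau)$, and at $\gamma$ to force $f(\sigma') > f(\gamma)$ --- yielding $f(\gamma) < f(\tau)$, which contradicts the chained inequality. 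All steps check out: $\sigma$ and $\sigma'$ are distinct because the two vertices of $\tau$ not in $\gamma$ are distinct, and since the paper's version of \defref{morsedefn} bounds the count over \emph{all} proper faces and cofaces (not merely those of codimension one), the sets you exhibit $\sigma$ and $\sigma'$ inside are, if anything, larger, so the bounds apply a fortiori. The concluding step, upgrading ``not both'' to ``exactly one'' via \lemref{regsim}, is also the right way to discharge the final claim. One presentational remark: you correctly flag the diamond property as the step where the simplicial structure enters; in the CW setting (which the paper claims its results extend to) this is exactly where regularity of the CW complex would be needed, so your parenthetical caveat is apt.
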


Regular pairs can not differ by more than one dimension.

\brad{also stated as a problem in \cite{nic}}
\begin{lemma}[Impossible Pairs]\label{lem:impossible}
Let $f:K\rightarrow \R$ be a discrete Morse
function. it is impossible
to have a pair of simplices $\tau^i\prec \sigma^p$ in $K$
with $i<p-1$ such that $f(\tau)>f(\sigma).$

\end{lemma}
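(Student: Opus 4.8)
The plan is to derive the contradiction directly from the two defining inequalities of a discrete Morse function (\defref{morsedefn}), with no induction needed. Suppose toward a contradiction that $\tau^i \prec \sigma^p$ with $i < p-1$ (equivalently, $p - i \geq 2$) and $f(\tau) > f(\sigma)$.

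First I would observe that $\sigma$ is itself a coface of $\tau$ --- of codimension $p-i \geq 2$, but a proper coface nonetheless --- and that $f(\sigma) < f(\tau)$ places $\sigma$ in the set $\{\beta \succ \tau \mid f(\beta) \leq f(\tau)\}$. Condition~(1) of \defref{morsedefn} applied to $\tau$ bounds this set by a single element, so $\sigma$ must be the unique such coface; consequently every coface $\beta \succ \tau$ with $\beta \neq \sigma$ satisfies $f(\beta) > f(\tau)$. This step is the crux of the argument: recognizing that the high-codimension coface $\sigma$ already consumes $\tau$'s single permitted upward exception is exactly what forces large function values on the intermediate faces below.

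Next I would enumerate the codimension-one cofaces of $\tau$ that lie below $\sigma$. Each is obtained by adjoining to $\tau$ exactly one of the $p-i$ vertices of $\sigma$ not already in $\tau$, so there are exactly $p-i \geq 2$ of them, say $\nu_1, \ldots, \nu_{p-i}$; each satisfies $\tau \prec \nu_k \prec \sigma$ and $\nu_k \neq \sigma$, since $\dim{\nu_k} = i+1 < p$. By the previous paragraph each satisfies $f(\nu_k) > f(\tau)$, and combining this with $f(\tau) > f(\sigma)$ yields $f(\nu_k) > f(\sigma)$ for every $k$.

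Finally, each $\nu_k$ is a face of $\sigma$ with $f(\nu_k) \geq f(\sigma)$, so the set $\{\gamma \prec \sigma \mid f(\gamma) \geq f(\sigma)\}$ contains the at-least-two distinct elements $\nu_1, \ldots, \nu_{p-i}$, violating condition~(2) of \defref{morsedefn} applied to $\sigma$. This is the desired contradiction. I expect the only genuinely delicate point to be the first step --- noticing that $\sigma$ must be counted against $\tau$'s upward condition even though it is a high-codimension coface --- while the vertex count producing the $p-i$ intermediate faces and the closing tally are routine.
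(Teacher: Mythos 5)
Your proof is correct, and in fact the paper has nothing to compare it against: \lemref{impossible} is stated in the appendix without proof, flagged only as a problem from \cite{nic}, so your argument fills a genuine gap rather than duplicating one. The structure is sound: since $f(\sigma) < f(\tau)$, the coface $\sigma$ occupies the single exception allowed by condition~(1) of \defref{morsedefn} at $\tau$; hence every other proper coface of $\tau$, in particular each of the $p-i \geq 2$ simplices $\nu_k$ with $\tau \prec \nu_k \prec \sigma$ and $\dim \nu_k = i+1 < p$, satisfies $f(\nu_k) > f(\tau) > f(\sigma)$; these $\nu_k$ are proper faces of $\sigma$ lying in $K$, so condition~(2) at $\sigma$ is violated. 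The counting is right, and no step is missing.

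One point you should make explicit, because it is not cosmetic: your first step is only available because \defref{morsedefn} quantifies over \emph{all} proper cofaces, not just those of codimension one as in Forman's original definition \cite{forman02}. Under the codimension-one form, $\sigma$ simply does not count against condition~(1) at $\tau$, and the lemma as stated actually fails: on a single triangle $[a,b,c]$ the assignment $f(a)=5$, $f(b)=1$, $f(c)=2.5$, $f(ab)=6$, $f(ac)=3$, $f(bc)=2$, $f(abc)=4$ satisfies both codimension-one conditions everywhere, yet $a \prec abc$ has codimension two with $f(a) > f(abc)$. So your proof is not merely one of several routes---the statement is a property of the stronger, all-faces definition used in this paper, and any proof must lean on that strength exactly where yours does. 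Since the surrounding appendix is concerned with comparing the two definitions, stating which definition is in force keeps the argument from appearing circular, and your counterexample-sensitive step is precisely the place to say so.
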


The previous lemmas give the following characterization of
a regular simplex in discrete Morse function:

\begin{lemma}[Check if Regular]\label{lem:dmf-equiv}
   A simplex $\sigma$ is regular if and only if $\sigma$ is paired
   with either a face or a coface with co$-$dimension 1 but not both.
   
\end{lemma}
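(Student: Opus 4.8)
The plan is to reduce the statement to the two characterizations already in hand, \lemref{regsim} and \lemref{exclusion}, once we make precise what it means for $\sigma$ to be \emph{paired}. Recall from \defref{morsedefn} that the induced GVF consists of pairs $(\sigma,\tau)$ with $\sigma \prec \tau$ and $f(\sigma) \geq f(\tau)$, and that by \lemref{impossible} every such pair has codimension one. Hence, for a $p$-simplex $\sigma$, being paired with a codimension-one coface means exactly that there is a $\beta \succ \sigma$ with $\dim{\beta} = p+1$ and $f(\beta) \leq f(\sigma)$, which is condition (i) of \lemref{regsim}; and being paired with a codimension-one face means there is a $\gamma \prec \sigma$ with $\dim{\gamma} = p-1$ and $f(\gamma) \geq f(\sigma)$, which is condition (ii). So the two notions of pairing coincide exactly with the two disjuncts of the Regular Characterization.

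With this translation in place, the two directions are short. For the forward direction I would assume $\sigma$ is regular; by \lemref{regsim}, at least one of (i), (ii) holds, so $\sigma$ is paired with a codimension-one coface or a codimension-one face. By \lemref{exclusion}, conditions (i) and (ii) cannot hold simultaneously, so $\sigma$ is paired with \emph{exactly} one of the two, which supplies the ``but not both'' clause. For the converse, I would assume $\sigma$ is paired with a codimension-one face or a codimension-one coface; then one of (i), (ii) holds, and \lemref{regsim} immediately yields that $\sigma$ is regular.

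The only genuine subtlety, and the step I would handle most carefully, is this translation between the matching vocabulary and the inequality conditions, in particular the appeal to \lemref{impossible} to guarantee that any face or coface witnessing regularity may be taken to be of codimension one (so that ``paired'' really does match conditions (i) and (ii), rather than some higher-codimension incidence). Beyond that, no new computation is needed: the entire content is lining up the three earlier results, \lemref{regsim}, \lemref{exclusion}, and \lemref{impossible}, against \defref{morsedefn}.
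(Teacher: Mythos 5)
Your proof is correct, and it is precisely the argument the paper intends but never writes down: the paper's own proof of this lemma is left empty (a to-do placeholder), with the preceding prose saying only that the previous lemmas give this characterization. Your assembly of those lemmas---\lemref{impossible} to make ``paired'' coincide with the codimension-one inequality conditions, \lemref{regsim} for ``regular iff at least one condition holds,'' and \lemref{exclusion} for the ``but not both'' clause---fills that gap in exactly the way the paper's setup suggests.
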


\begin{proof}
    \todo{}
\end{proof}

\todo{now, we can say that the definitions of critical are the same as well.}
}
\camera{\section{Computing the Link}

\brittany{Someone (maybe Dave?) needs to go carefully through the stuff that is
thrown into this file.  Things that I know we'll need:
(1) we reference \lemref{rllbound} from within \thmref{king}.
(2) We need to call RecursiveLowerLink from \algref{eraw} or something like that.
}

We now give an algorithm that computes the link of a vertex in a simplicial complex. Like our main algorithm,
the link algorithm can be visualized using the Hasse diagram. For a given vertex, $v$ assign all simplicies
that contain $v$ the color blue.
Then assign all faces
of the blue simplicies the color red if they are not already blue. The lower link of $v$ is all simplicies that are red and not blue and have
value less than the value of $v.$ See \figref{links} and \algref{link}.
Notice that a simplex, $\sigma,$ is red if it is contained in a simplex that contains $v$ but $v$ is not
contained in $\sigma,$ the definition of link$(v).$

\begin{figure}[htb]
        \centering
        \begin{subfigure}[b]{0.35\textwidth}
        \includegraphics[width=\textwidth]{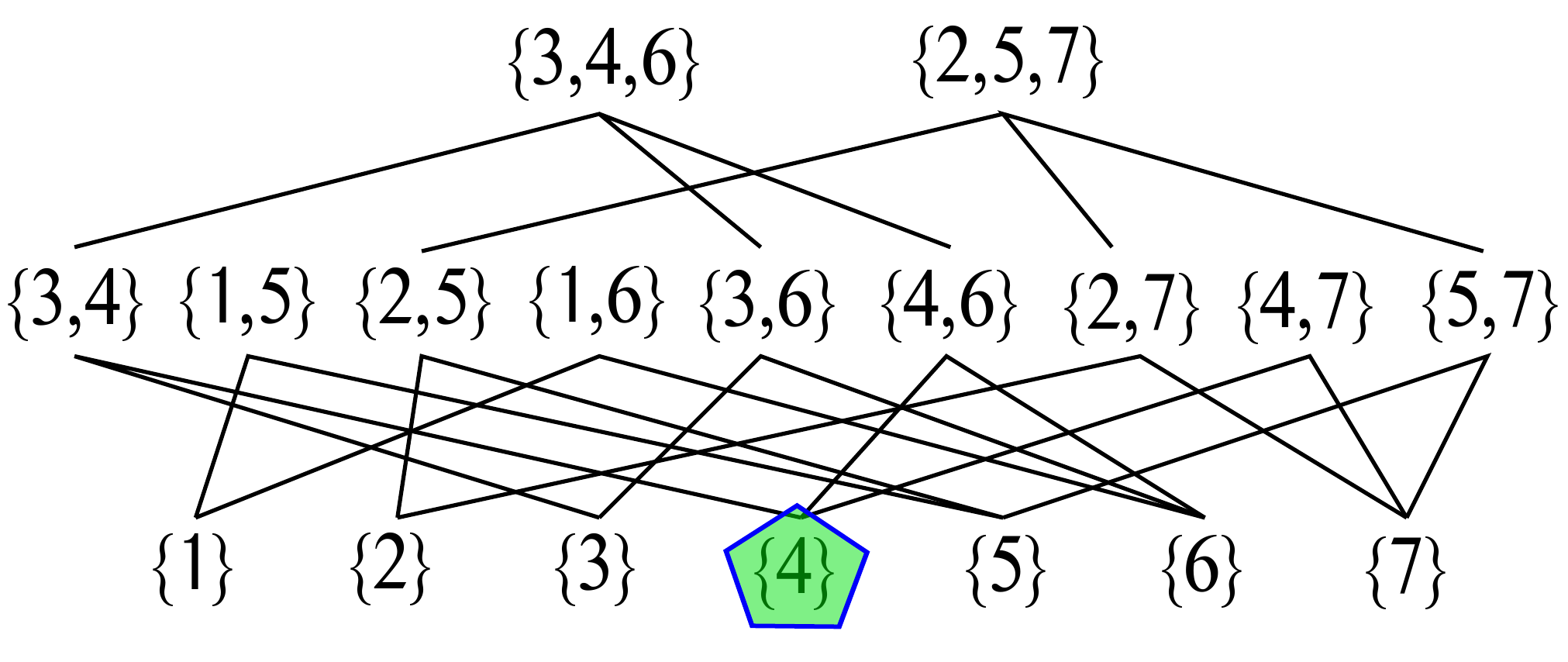}
        \caption{Say we wish to find the link of the vertex with function value $4$ in \figref{cat}.}
	 \label{fig:link1}
        \end{subfigure}
        \hspace{1cm}
        \begin{subfigure}[b]{0.35\textwidth}
        \includegraphics[width=\textwidth]{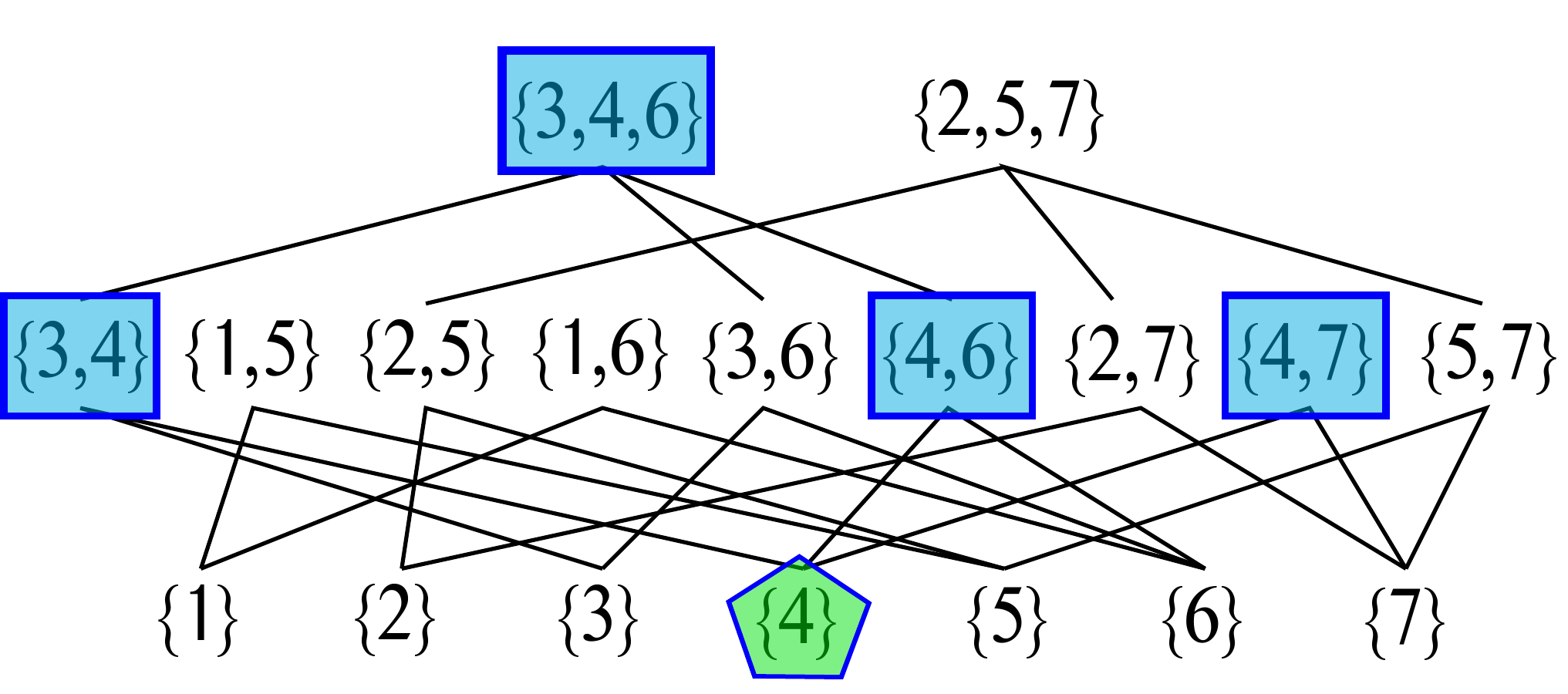}
        \caption{We move up the Hasse diagram and find all simplices containing $4$
		and color them blue.}
\label{fig:link2}
        \end{subfigure}

        \begin{subfigure}[b]{0.35\textwidth}
        \includegraphics[width=\textwidth]{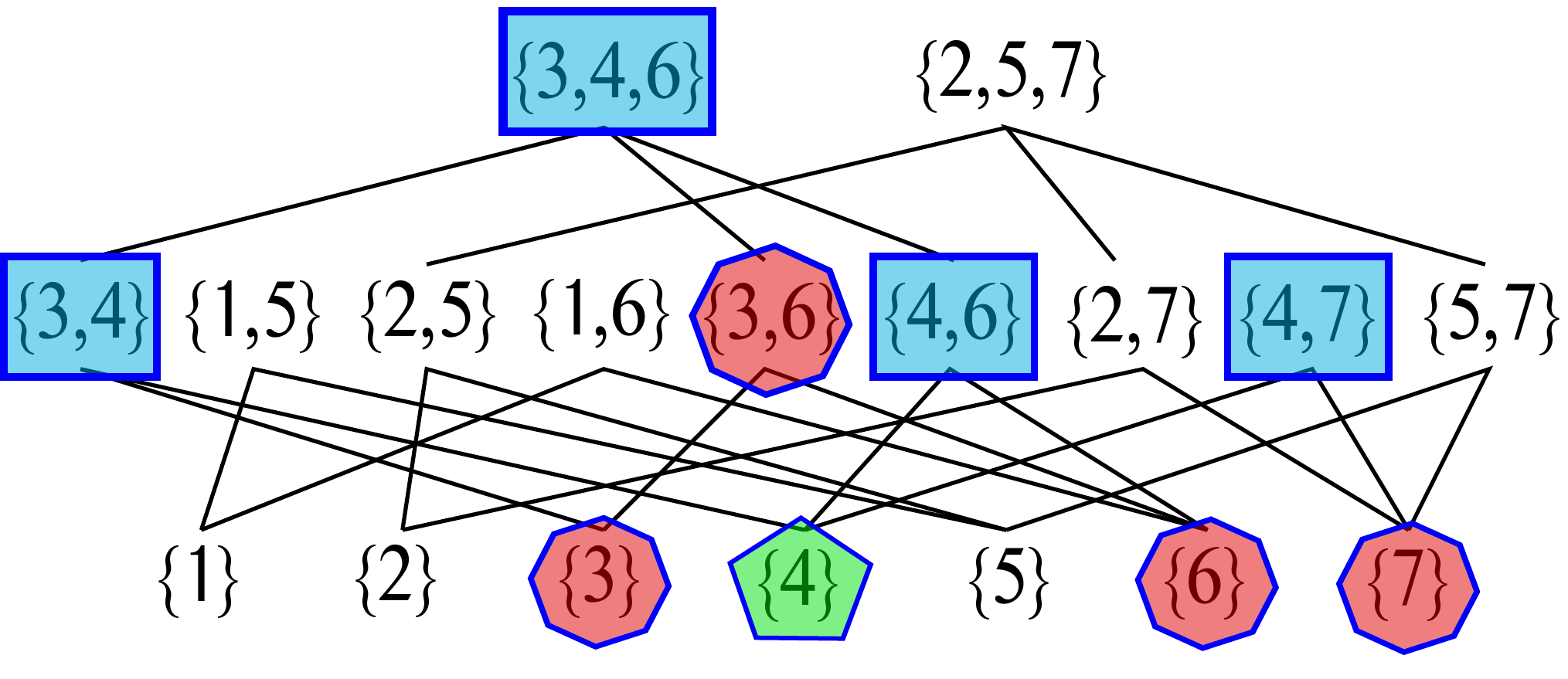}
        \caption{Next, we find all faces of the blue simplicies that are not already blue. These red simplices
		are the link of $4$.}
\label{fig:link3}
        \end{subfigure}
		\caption{Computing the Link of a Vertex.}
			\label{fig:links}
\end{figure}

\begin{lemma}[Lower Link]

The lower link of each vertex can be computed in $O(n).$
\label{lower}
\end{lemma}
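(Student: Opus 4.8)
The plan is to prove correctness and the running time of the coloring routine of \algref{link} separately, reasoning entirely on the Hasse diagram $\hasseK$. Fixing a vertex $v$, I would first make the coloring precise: a node becomes \emph{blue} exactly when it is reached from $v$ by an upward traversal along the edges in the various $\up{\sigma}$ sets, and it becomes \emph{red} when it is a proper face of some blue node, discovered by a downward traversal along the edges in the $\down{\sigma}$ sets, provided it is not already blue. The lower link is then read off from the red-but-not-blue nodes.

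For correctness I would show three containments. First, the blue nodes are exactly $\starK{v}$: the inclusion $\subseteq$ holds because every coface of a simplex containing $v$ still contains $v$, and the reverse inclusion holds because any $\tau \ni v$ is reached from $\{v\}$ by inserting the remaining vertices of $\tau$ one at a time, yielding a chain $\{v\} = \sigma_0 \prec \sigma_1 \prec \cdots \prec \tau$ of codimension-one steps that stays inside the star. Second, the faces of the blue nodes are exactly $\closedStarK{v}$, so the red-but-not-blue nodes are precisely $\closedStarK{v} \setminus \starK{v} = \linkK{v}$. Third, keeping only those red nodes all of whose vertices $w$ satisfy $f_0(w) < f_0(v)$ produces a downward-closed family, hence a subcomplex, and by maximality it equals $\llinkK{v}$.

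For the running time I would use a charging argument, and I expect this to be the main obstacle. Each node receives a color at most once, enforced by an $O(1)$ ``already colored'' test stored on the node, so no node is expanded more than once; the upward phase then scans the edges in $\up{\sigma}$ for a blue $\sigma$ at most once each, and the downward phase scans the edges in $\down{\sigma}$ for a blue $\sigma$ at most once each. The subtle point is that a naive count of the face relations traversed inside $\closedStarK{v}$ could carry an extra factor of $\dim{K}$, so I would charge each scanned Hasse edge a single unit against its unique blue endpoint and rely on the coloring guard to ensure no node or edge is reprocessed. With that amortization the cost is linear in the number of nodes and incident edges actually touched, all of which lie in $\closedStarK{v}$, giving the claimed $O(n)$ bound; the correctness containments above are routine by comparison.
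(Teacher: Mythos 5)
Your correctness containments are fine, and they formalize exactly the blue/red coloring of the paper's link algorithm; the paper's own proof of this lemma is nothing more than the sentence that the algorithm ``iterates over everything in the Hasse diagram at most twice.'' The genuine problem is in your runtime argument: the charging scheme you propose does not repair the defect you yourself identified. Charging each scanned Hasse edge to its unique blue endpoint is not an amortization at all, because a blue node $\sigma$ of dimension $p$ absorbs one charge for each of its $p+1$ edges in $\down{\sigma}$ (and more for the edges in $\up{\sigma}$); the ``already colored'' guard only prevents a node from being \emph{expanded} twice, it does not prevent each edge incident to a blue node from being scanned once. So the total work is $\sum_{\sigma \in \starK{v}} \left( |\up{\sigma}| + |\down{\sigma}| \right)$, which is precisely the naive count you set out to avoid. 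Concretely, let $K$ be a single $d$-simplex together with all of its faces and let $v$ be any vertex: then $n = 2^{d+1}-1$, the star $\starK{v}$ has $2^{d}$ nodes, and the down edges of blue nodes alone number $\sum_{p=0}^{d} \binom{d}{p}(p+1) = d\,2^{d-1} + 2^{d} = \Theta(dn)$. Your concluding step silently equates ``linear in the nodes and incident edges actually touched'' with $O(n)$, and that equation is false; what your traversal actually achieves is $O(dn)$, or $O(n)$ only if $\dim{K}$ is treated as a constant.

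To be fair, the same weakness is hidden in the paper's one-line proof: ``iterating over everything in the Hasse diagram'' is linear in the number of nodes \emph{plus edges}, and the paper itself observes in \lemref{decoration} that the edge count of $\hasseK$ is only bounded by $O(dn)$. So you have surfaced a real issue rather than invented one, but flagging the obstacle is not the same as overcoming it. To honestly obtain the stated bound you would need one of: a bounded-dimension assumption; a restatement of the lemma with $n$ meaning the size (nodes plus edges) of the Hasse diagram; or a traversal that inspects only $O(1)$ edges per star simplex --- and the last option is doubtful for this style of algorithm, since even discovering $\starK{v}$ by upward search must scan every Hasse edge joining two star simplices, of which there are $\Theta(d\,2^{d})$ in the example above. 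None of these repairs appears in your argument (nor in the paper's).
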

\begin{proof}
We iterate over everything in the Hasse diagram at most twice.
\end{proof}

\begin{algorithm}
    \caption{Link}\label{alg:link}
    \begin{algorithmic}[1]
    \Require $K$ a \simp\  with $n_0$ vertices, $f$ an injective map from $K$ to $\R,$ the sorted Hasse diagram, $\mathcal{H}_K^*$ and a vertex $v.$
    \Ensure The lower link of $v.$

    \State{Find all cofaces of $v$, color these simplices blue.}
    \State{Find all faces of the blue simplices, color them red if they are not blue.}

    \Return the red vertices.
    \end{algorithmic}
\end{algorithm}

\begin{algorithm}
    \caption{\rllink}\label{alg:recursivell}
    \begin{algorithmic}[1]
        \Require $K$ a \simp\  with $n_0$ vertices, $f$ an injective map from $K$ to $\R.$
        \Ensure The number lower link computations in \eraw.
        \State Let $\ell = 0$
        \ForAll{$v \in K$} \label{line:recursivell-rllv}
            \State Let $K' :=$ the lowerlink of $v$ and let $f'$ be $f$ restricted to $K'.$
            \State $\ell = \ell +1$

            \If{$K'\neq \emptyset$}

            \State \rllink($K',f'$)

            \EndIf
        \EndFor
        \State Return $\ell.$
    \end{algorithmic}
\end{algorithm}

We count the number of times the lower link of a vertex is computed in
\algref{recursivell} when the input
contains a $d$-dimensional simplex.

Assuming that computing the lower link is linear in the number of simplices,
since there are $\Omega(2^{d+1})$ lower link computations and each take
$\Theta(n)$ time, we have the following:

\begin{lemma}[Recursive Lower Link Bound]\label{lem:rllbound}
    The number of lower link computations in \rllink\ is $\Omega(2^{d+1}).$
\end{lemma}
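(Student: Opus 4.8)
The plan is to reduce the lemma to a clean recurrence for the total count and then prove the bound by induction on $d$. First I would let $T(K)$ denote the total number of lower-link computations performed over the \emph{entire} recursive execution of \rllink\ on $K$ (the loop ranges over the vertices $v \in K_0$, since the lower link is defined only for vertices). Because each vertex triggers exactly one lower-link computation, and a nonempty lower link $\llinkK{v}$ triggers a recursive call of \rllink\ on $\llinkK{v}$, unwinding the recursion gives
\[
    T(K) \;=\; |K_0| \;+\! \sum_{\substack{v \in K_0 \\ \llinkK{v} \neq \emptyset}} T\bigl(\llinkK{v}\bigr).
\]
The lemma then amounts to: if $K$ contains a $d$-simplex, $T(K) \geq 2^{d+1}-1$, which is $\Omega(2^{d+1})$.

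Next I would run the induction on $d$. The base case $d=0$ is immediate: a complex containing a vertex has $T(K) \geq |K_0| \geq 1 = 2^{1}-1$. For the step, fix a $d$-simplex $\sigma = [v_0,\dots,v_d] \in K$ with $f_0(v_0) < \dots < f_0(v_d)$. The key observation is that \emph{every} vertex $v_k$ of $\sigma$ already forces a lower-dimensional simplex into its lower link: for $1 \le k \le d$ the face $\tau_k := [v_0,\dots,v_{k-1}]$ satisfies $\tau_k \cup \{v_k\} \subseteq \sigma \in K$, so $\tau_k \in \linkK{v_k}$, and since every vertex of $\tau_k$ has value below $f_0(v_k)$ we get $\tau_k \in \llinkK{v_k}$. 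Hence $\llinkK{v_k}$ is nonempty and contains a $(k-1)$-simplex, so the induction hypothesis (applied in dimension $k-1 < d$) yields $T(\llinkK{v_k}) \geq 2^{k}-1$. Retaining only the $d+1$ distinct summands indexed by $v_0,\dots,v_d$ in the recurrence, and discarding the remaining nonnegative terms, gives
\[
    T(K) \;\geq\; (d+1) + \sum_{k=1}^{d}\bigl(2^{k}-1\bigr) \;=\; (d+1) + (2^{d+1}-2-d) \;=\; 2^{d+1}-1,
\]
which closes the induction. Equality holds when $K$ is a single closed $d$-simplex, so the bound is tight.

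The main obstacle is recognizing where the exponential growth comes from and choosing the branching that makes the induction close. A tempting shortcut follows only the highest vertex $v_d$, whose lower link contains the opposite $(d-1)$-face of $\sigma$; this gives $T(K) \geq 1 + T(\llinkK{v_d})$, but an induction targeting $2^{d+1}-1$ would then demand $1 + (2^{d}-1) \geq 2^{d+1}-1$, i.e.\ $2^{d} \geq 2^{d+1}-1$, which fails for every $d \geq 1$. Even folding in the $|K_0| \geq d+1$ base contributions only reaches $d + 2^{d}$, still below $2^{d+1}-1$ for $d \geq 2$. The crux, and the step I expect to be the genuine difficulty, is that all $d+1$ vertices of $\sigma$ \emph{simultaneously} seed independent recursive subproblems — vertex $v_k$ a subproblem of dimension $k-1$ — so that the contributions sum as the geometric series $\sum_{k=1}^{d}(2^{k}-1)$; this summation over the full branching, rather than a single recursive chain, is exactly what produces the doubling.
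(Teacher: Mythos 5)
Your proof is correct and follows essentially the same route as the paper's: both sort the vertices of a $d$-simplex by function value, observe that the $k$-th vertex's lower link contains a $(k-1)$-simplex so that the recursion branches into subproblems of every dimension $0$ through $d-1$, and solve the resulting recurrence $L_d = (d+1) + \sum_{k=0}^{d-1} L_k = 2^{d+1}-1$. Your write-up is in fact a bit more careful than the paper's (which implicitly takes $K$ to be a single closed $d$-simplex and computes the exact count), since you phrase it as a lower bound for an arbitrary complex containing a $d$-simplex and discard the extraneous nonnegative terms explicitly.
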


\begin{proof}
    We use strong induction on $d.$ Let $L_d$ be the number of lower link
    computations of a $d-$simplex in \rllink. When $d=0$ we have a
    single vertex and we compute one lower link.

     Suppose the
    vertices have been sorted by function value, then when $v_{0}$ is selected
    in \lineref{rllv}{recursivell}, the lower link is empty, when $v_{1}$ is selected the
    lower link is a single vertex, when $v_{2}$ is selected the lower link is an
    edge. This pattern continues.  When $v_{i}$ is
    selected the lower link is a $i-1-$simplex. When we iterate over all
    vertices in $\sigma$ we call \rllink\ on a simplex of every
    dimension from $0$ to $d-1.$ Since there are $d+1$ vertices in $\sigma$ we
    have an additional $d+1$ lower link computations.

    This gives the following recurrence relation $$L_d=L_{d-1} +L_{d-2}+\ldots +
    L_{1}+L_0+(d+1)$$ with $L_0=1.$ Strong induction shows that the number of
    lower link computations is $2^{d+1}-1.$
\end{proof}


\eraw\ proceeds recursively on the lower link of each vertex.
This leads to many unnecessary computations of the lower link of vertices.
\shortlineref{llcall}{eraw} and \shortlineref{callexact}{eraw} of \eraw\ recursively compute the lower link
of each vertex in $K.$ We include this subalgorithm, called
\rllink, in \algref{recursivell}.
}

\end{document}